\documentclass[journal]{IEEEtran}
\usepackage[utf8]{inputenc}

\usepackage{amsmath}
\usepackage{mathtools}
\usepackage{amsfonts}
\usepackage[font=footnotesize]{subcaption}

\usepackage[labelformat=simple]{subcaption}

\DeclareCaptionLabelSeparator{periodspace}{.\quad}
\captionsetup{font=footnotesize,labelsep=periodspace,singlelinecheck=false}
\captionsetup[sub]{font=footnotesize,singlelinecheck=true}

\usepackage{graphicx}
\usepackage{textcomp}
\usepackage{bbm}
\usepackage{amssymb}
\usepackage{hyperref}
\usepackage{url}
\usepackage{verbatim}

\usepackage{amsthm}
\usepackage{amssymb}
\usepackage{amsmath}
\usepackage{graphicx}
\usepackage{textcomp}
\usepackage{cite}

\def\Eii{\mathrm{E}_1}
\def\Eii{\operatorname{E_1}}

\def\E{\mathbb{E}}
\def\P{\mathbb{P}}
\def\R{\mathbb{R}}

\def\argmin{\operatorname{arg~min}}
\def\argmax{\operatorname{arg~max}}

\def\dd{{\rm d}}

\def\rot{\operatorname{rot}}

\theoremstyle{plain}

\newtheorem{theorem}{Theorem}
\newtheorem{fact}{Fact}
\newtheorem{remark}{Remark}
\newtheorem{lemma}{Lemma}
\newtheorem{corollary}{Corollary}
\newtheorem{definition}{Definition}

\newtheorem{proposition}{Proposition}

\usepackage{verbatim}

\newcommand{\comm}[1]{}
\DeclareMathOperator\erfc{erfc}

\def\peq#1{\stackrel{\text{\scriptsize(#1)}}{=}}

\usepackage{cite}
\usepackage{colonequals}
\def\peq#1{\stackrel{\text{\scriptsize(#1)}}{=}}

\def\Pr{\mathbb{P}}
\def\Ex{\mathbb{E}}
\def\Var{\mathbb{V}}
 \title{ Joint Spatial-Propagation Modeling \\
 of Cellular Networks Based on the\\ Directional Radii of Poisson Voronoi Cells}
 \author{\IEEEauthorblockN{Ke Feng,  \textit{University of Notre Dame}, and}
\and
\IEEEauthorblockN{Martin Haenggi, \textit{University of Notre Dame}}
\thanks{This work has been supported in part by the US National Science Foundation through Grant 2007498.}
}

\begin{document}
\maketitle

\begin{abstract}
In coverage-oriented networks, base stations (BSs) are deployed in a way such that users at the cell boundaries achieve sufficient signal strength. The shape and size of cells vary from BS to BS, since the large-scale signal propagation conditions differ in different geographical regions. This work proposes and studies a joint spatial-propagation (JSP) model, which considers the correlation between cell radii and the large-scale signal propagation (captured by shadowing).  

We first introduce the notion of the directional radius of Voronoi cells, which has applications in cellular networks and beyond. The directional radius of a cell is defined as the distance from the nucleus to the cell boundary at an angle relative to the direction of a uniformly random location in the cell. We study the distribution of the radii in two types of cells in the Poisson Voronoi tessellations: the zero-cell, which contains the origin, and the typical cell. 

The results are applied to analyze the JSP model. We show that, even though the Poisson point process (PPP) is often considered as a pessimistic spatial model for BS locations, the JSP model with the PPP achieves coverage performance close to the most optimistic one---the standard triangular lattice model. Further, we show that the network performance depends critically on the variance of the large-scale path loss along the cell boundary.

\end{abstract}

\begin{IEEEkeywords}
Poisson Voronoi tessellations, directional radius, cellular networks, correlated shadowing, meta distribution
\end{IEEEkeywords}

\section{Introduction}

\subsection{Motivation}
For coverage, cellular operators deploy more base stations (BSs) in regions with severe signal decay, and vice versa, such that users at the cell boundaries achieve a sufficient and consistent signal strength. As a result, the spatial deployment of BSs and the large-scale propagation conditions are inherently correlated. In most works, this correlation is ignored, $i.e.,$ the BS deployment is assumed independent of the shadowing coefficients.

The first and only work that considers joint spatial and propagation modeling is \cite{guo2015jsp}, where the authors reverse engineer the path loss exponent (PLE) of the power-law path loss model from the BS locations. A fundamental assumption in \cite{guo2015jsp} is that the PLE inside each Voronoi cell is determined by the BS locations such that users at the cell edge receive an average power $P_0$ from their nearest BS. 
It is shown that under this assumption, the PPP yields almost the same success probability as the triangular lattice networks. However, there are a few drawbacks to that model. Firstly, the power-law path loss model is inherently an end-to-end model---the total path loss when a signal travels through multiple cells cannot be decomposed into per-cell path loss functions. {For instance,  for a signal that travels through two cells each with diameter $d$ and PLE $\alpha$, one can not decompose the total path loss, $(d+d)^{-\alpha}$ into the product of per-cell path losses $d^{-\alpha} d^{-\alpha}$.} Secondly, the assumption that the average power (over fading) $P_0$ is received by all users along the Voronoi cell edge is overly optimistic. In an actual deployment, this quantity is inevitably subject to variation.
And lastly, the coverage analysis in \cite{guo2015jsp} is limited to the spatial average, whereas the coverage used by operators is better captured by the meta distribution \cite{haenggi2016meta}.

{This work proposes a joint spatial and propagation model of cellular networks based on the directional radii of Poisson Voronoi cells. Specifically, our work assumes that the Poisson deployment of BSs results from the following BS placement method: BSs are deployed more densely in regions with severe signal attenuation and less densely in regions with more benign propagation conditions. In other words, the shape and size of the Voronoi cells reflect the underlying propagation conditions, which we reverse-engineer to devise a cell-dependent correlated shadowing model. To do so, it is necessary to study the cell shape and radii in the Poisson Voronoi tessellation (PVT). The contributions of the work are summarized as follows.}

\subsection{Contributions}
\begin{enumerate}
    \item  We characterize the shape and size of the Poisson Voronoi cells by introducing the notion of the directional radius in Voronoi tessellations.
    
    \item For the PVT, we derive the exact distributions of the directional radius in the zero-cell and the uniform-angled radius in the typical cell. The results reveal the asymmetry of Poisson Voronoi cells and also lead to a new approach of evaluating the mean cell areas. For cases without an explicit expression, simulation results and approximations are provided.
    
    \item We introduce and study a joint spatial-propagation (JSP) model for coverage-oriented cellular networks. We consider cell-dependent shadowing where the shadowing coefficients are conditionally log-normal random variables given the BS point process such that users at the cell edges receive an expected power $P_0$. Hence the JSP model ascribes the irregular deployment of base stations to an intelligent design by the operators, rather than to pure randomness, as is done in most of the literature.
    
    \item We show that the network performance depends critically on the variance of the received power along the cell boundary. While the PPP model (without shadowing or with independent shadowing) has been established as a pessimistic model for coverage-oriented deployments \cite{guo13spatial}, the SIR distribution of the JSP model for the PPP is close to that of the standard triangular lattice model (without shadowing) when the conditional variance (given the point process) is zero; as the variance increases, the performance of the JSP model for the PPP deteriorates to that of the standard PPP model.
\end{enumerate}
\subsection{Related Work}

{The shadowing coefficients introduced in this work are cell-dependent and  correlated. The correlation is due to the fact that in the PVT, nearby cells are correlated in shape and size and, in particular, in their directional radii.} In addition to \cite{guo2015jsp}, also relevant to this work are other models that consider correlated shadowing. 

One of the first correlated shadowing models is proposed in \cite{Gudmundson1991shadowing}, where for a fixed BS and a moving user with a constant velocity, the periodically sampled shadowing is a discrete process whose autocorrelation decays exponentially. Following \cite{Gudmundson1991shadowing}, the joint Gaussian distribution has been widely used to model correlated shadowing \cite{Szy2010shadowing}.
A correlated shadowing model with an intuitive physical interpretation is modeled and analyzed in \cite{Baccelli2015shadowing}, where the ``penetration loss'' depends on the number of obstacles (in this case, buildings) in the signal path.
The shadowing variance is another factor that significantly impacts the network performance for both independent and correlated shadowing models \cite{Blasz15Poisson},   \cite{Ross17stillPoisson}.
It is derived in \cite{Ross17stillPoisson} that for general BS processes satisfying a homogeneity constraint, if the shadowing correlation is ``moderate'' (decreasing fast enough in distance), the signal strengths converge to those in a PPP as the shadowing variance increases. {Based on \cite{Blasz15Poisson},   \cite{Ross17stillPoisson}, we obtain a Poisson convergence result for the JSP model.}

To facilitate the analysis of the JSP model,  we study two types of Poisson Voronoi cells and their radii: the zero-cell, which is the cell that contains the origin, and the typical cell. While it is known that the zero-cell has a larger mean volume than the typical cell \cite{gilbert1962random,hayen2002areas}, the directional radii characterize the shape of the two cells, which has not been studied before to the best of our knowledge. Related, the distance from the nucleus to a uniformly random location in the typical cell and the distance from the nucleus of the zero-cell to the typical location are studied in \cite{mankar2019distance}. User point processes are characterized based on the PVT in \cite{haenggi2017user}.
The distribution of the distance from the typical Voronoi edge/vertex location to its nearest Poisson point is given in \cite{muche_2005,baumstark_last_2007}, while \cite{calka_2002} derives the distribution of the radius of the largest disk included within the cells and the radius of the smallest disk containing the cells. Some gamma-type results are given in \cite{moller_zuyev_1996,zuyev_1999}.

\subsection{Layout}
The rest of the paper is organized as follows. Section II gives the definition of the directional radii of Voronoi cells and characterizes their distribution for the PVT. {Since the directional radii have applications beyond the JSP model, we are presenting a more comprehensive set of results than strictly necessary for the latter parts of the paper. In Section III, we introduce the JSP model and the performance metrics of interest. } Section IV provides the analysis of the JSP model and its comparison with other relevant models. Section V concludes the paper.
\section{Directional Radii of Poisson Voronoi Cells}

\subsection{Definitions} 
 Let $\Phi\subset\R^2$ be a motion-invariant point process. To simplify the definitions of the cell radii, we first introduce the displaced typical cell and zero-cell such that the nucleus of the cells is at the origin $o$.

\medskip
{\em Typical cell.} Let
\[ \Phi^o\triangleq (\Phi \mid o\in \Phi) \]
and denote by $V(o)$ the Voronoi cell of $\Phi^o$ with nucleus $o$. {$V(o)$ is the typical cell in the Palm sense \cite{baumstark_last_2007}.} Let $z$ be a location chosen uniformly at random from $V(o)$
and let $(\|z\|,\zeta)$ be its polar coordinates. Next, define
\[ \tilde \Phi\triangleq\rot_{-\zeta} (\Phi^o) ,\]
where $\rot_u$ is a rotation around the origin by angle $u$, and denote the Voronoi cell of $\tilde\Phi$ with nucleus $o$ by $\tilde V(o)$. Let $D \triangleq \|z\|$ be the distance from the nucleus of the typical cell to the uniformly random location in the typical cell.

\medskip
{\em Zero-cell.}
Let $x_0\in\Phi$ be the closest point to the origin, $i.e.,$ $x_0=\argmin_{x\in\Phi} \{\|x\|\}$. Let $V_0$ be the Voronoi cell with nucleus $x_0$. By the definition of Voronoi tessellations, $V_0$ contains the origin. Letting $\varphi_0$ be the angle of $x_0$, define
\[ \tilde\Phi_0\triangleq \rot_{\pi-\varphi_0} (\Phi_{-x_0}) ,\]
where $\Phi_{y}$ is a translation of all points of $\Phi$ by $y$. This way, $o\in\tilde\Phi_0$. Let $\tilde V_0$ be the Voronoi cell of $\tilde\Phi_0$ with nucleus $o$.  Let $D_0 \triangleq \|x_0\|$.

\begin{figure}[t]
    \centering
    \begin{subfigure}{0.24\textwidth}
    \centering
    \includegraphics[width = \textwidth]{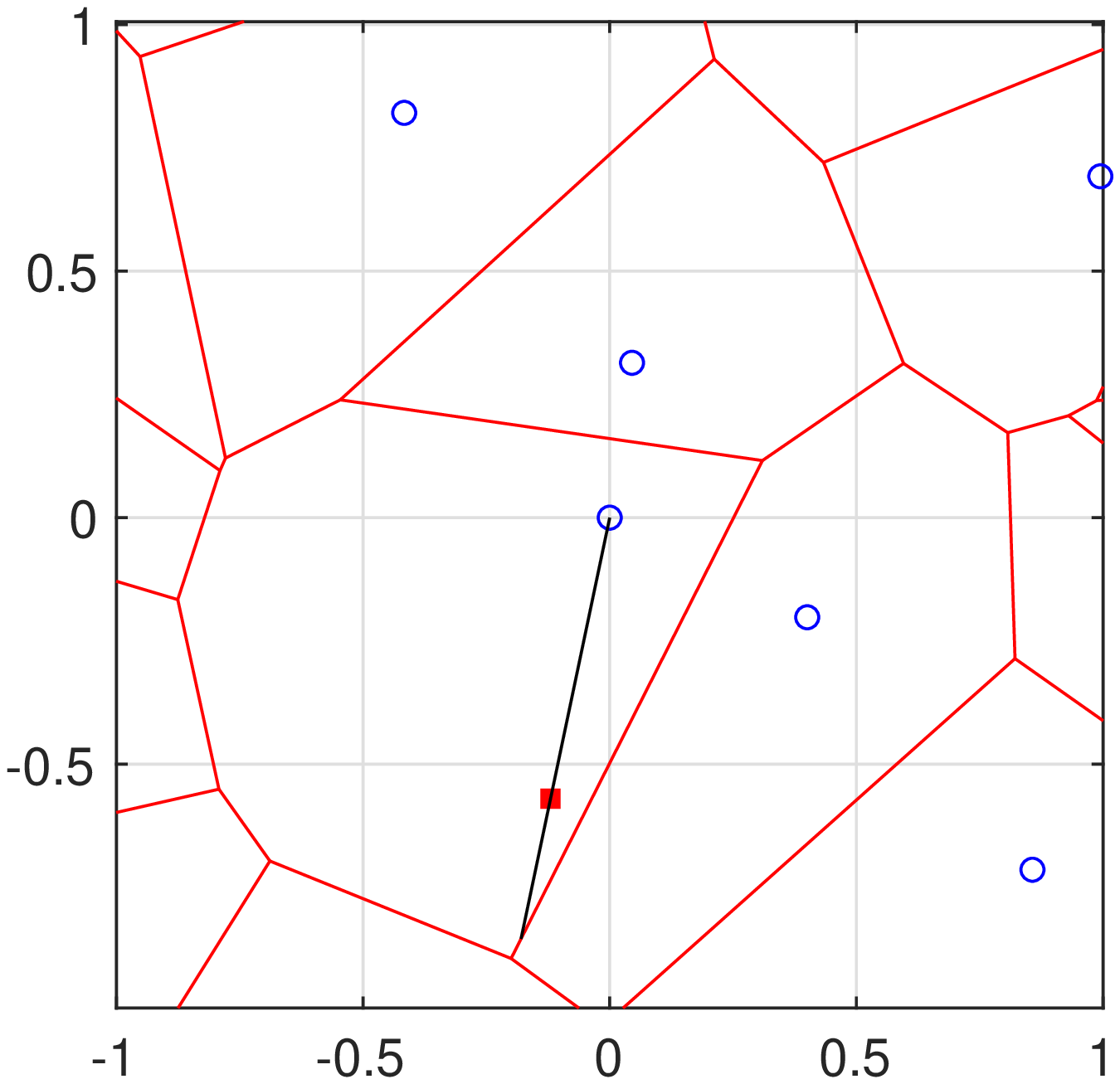}
 \caption{The typical cell. }\label{fig:hi-1}
    \end{subfigure}
     \begin{subfigure}{.24\textwidth}
     \centering
    \includegraphics[width = \textwidth]{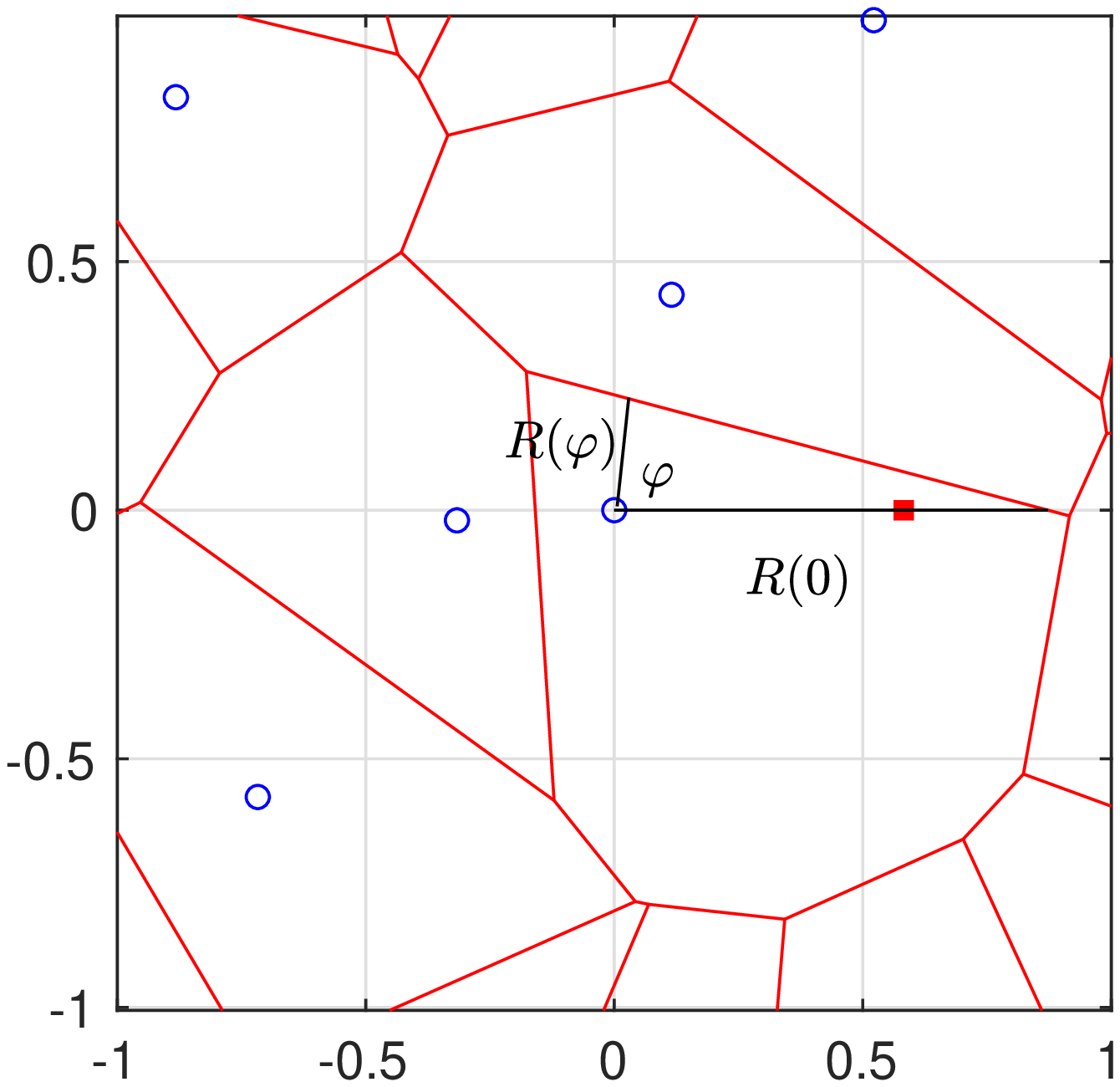}
     \caption{Rotated typical cell.}
    \end{subfigure}   
        \vspace{1em}

     \begin{subfigure}{.24\textwidth}
     \centering
    \includegraphics[width = \textwidth]{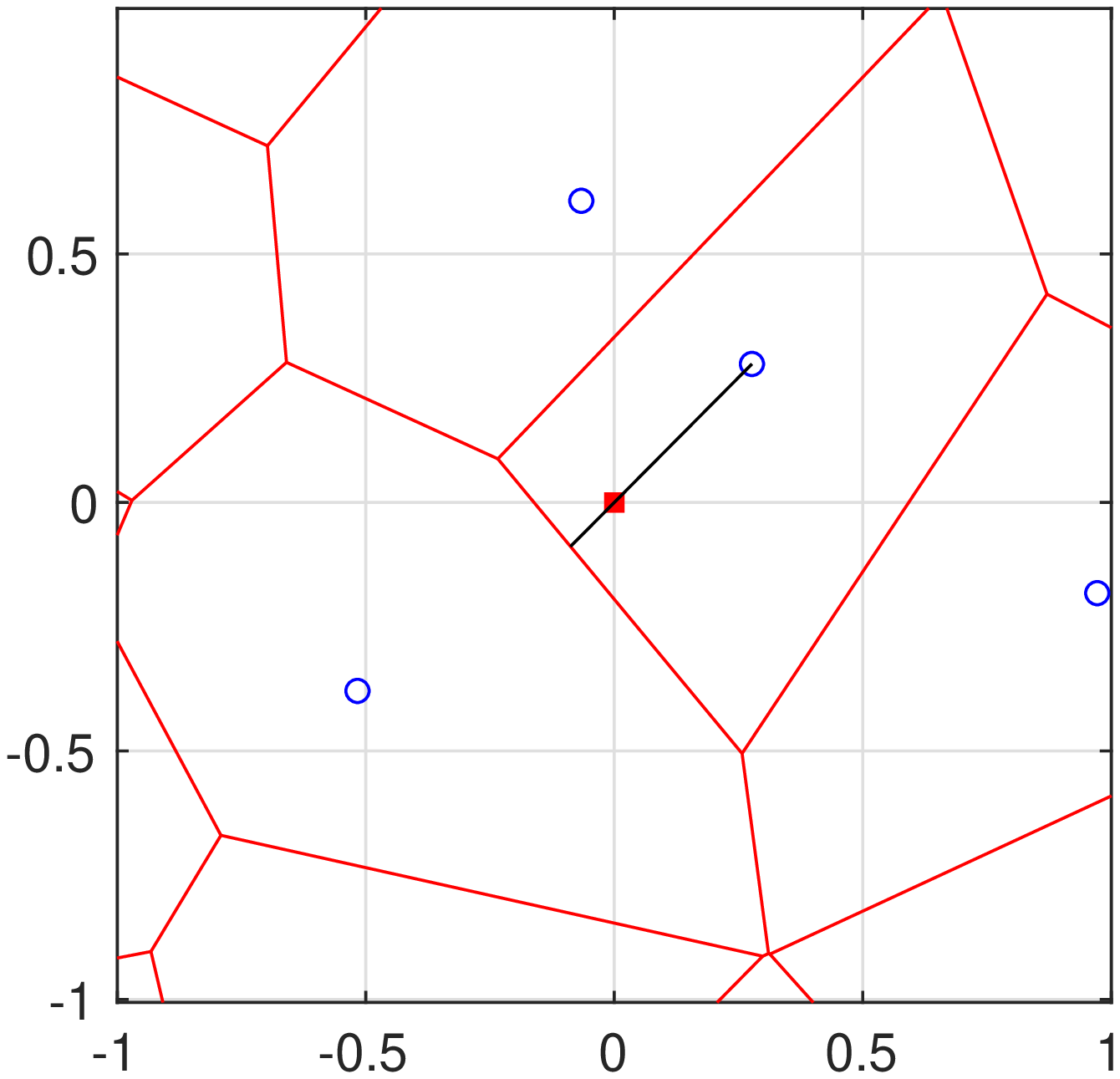}
     \caption{The 0-cell.}
    \end{subfigure}
     \begin{subfigure}{.24\textwidth}
     \centering
    \includegraphics[width = \textwidth]{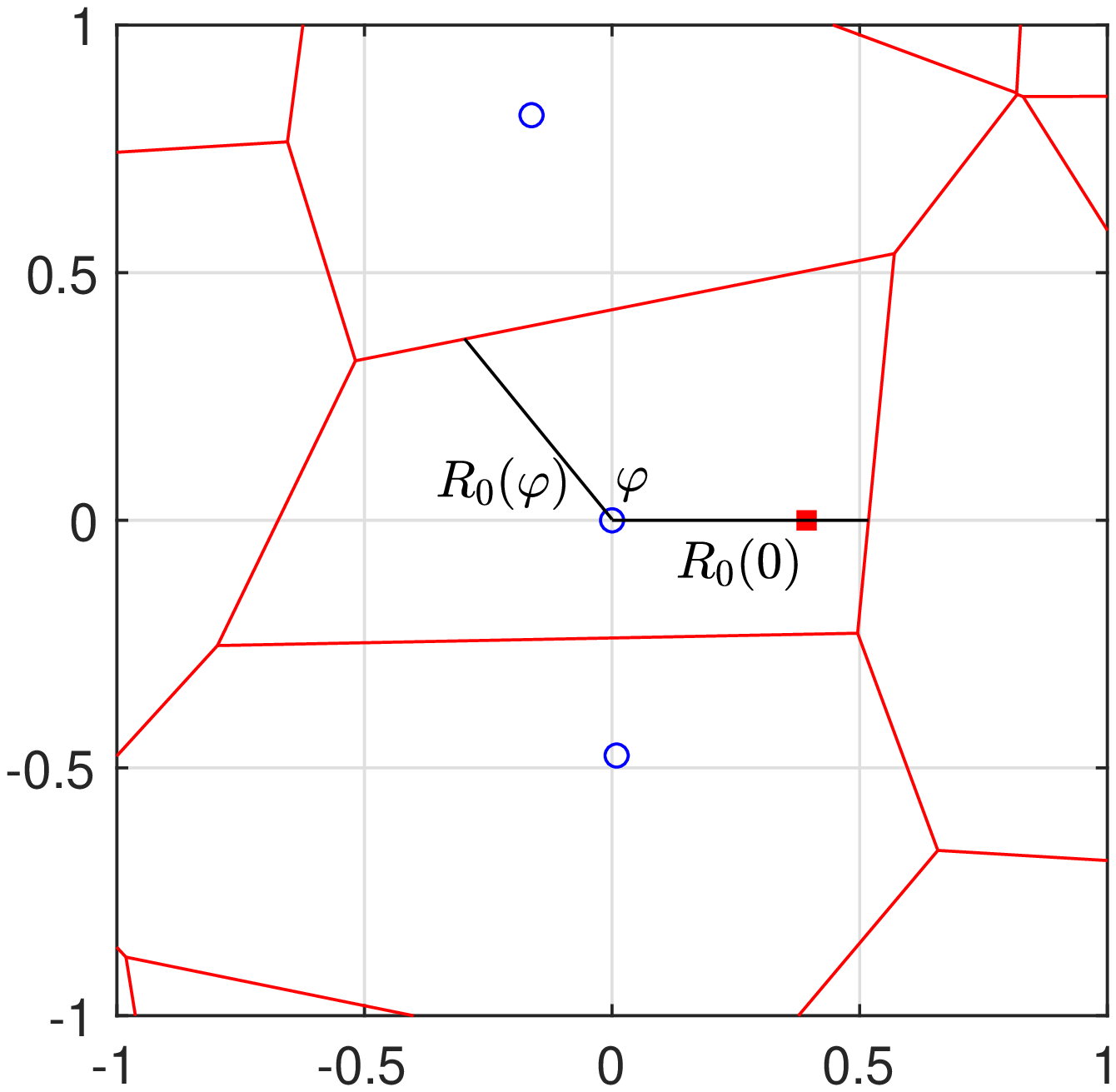}
     \caption{Rotated and displaced 0-cell.}
    \end{subfigure}  
    \caption{Illustration of the directional distances in the typical cell and the 0-cell of a PPP. The blue circles represent Poisson points and the red lines represent the Voronoi tessellations. In (a), the red square represents the uniform randomly distributed point $z$ in the typical cell. In (b), the cell is rotated such that $z$ resides on the positive x-axis. In (c) and (d), the red square represents the origin and the displaced origin.}
  \label{fig: illustration-rotation}
\end{figure}

\begin{definition}[Directional radius]
For $\varphi\in[0,2\pi)$, we define the directional radius $R(\varphi)$ to the boundary $\partial \tilde V(o)$ of the typical cell by
\[ (R(\varphi),\varphi) \in \partial \tilde V(o) \]
and the directional radius $R_0(\varphi)$ to the boundary $\partial \tilde V_0$ of the 0-cell by
\[ (R_0(\varphi),\varphi) \in \partial \tilde V_0. \]
\end{definition}
$(R(\varphi),\varphi)_{\varphi\in [0,2\pi)}$ parametrizes the boundary of the typical cell $\tilde V(o)$ in polar coordinates, and $R(0)$ is the distance
from the nucleus to the boundary in the direction of the randomly chosen point.
Similarly, $(R_0(\varphi),\varphi)_{\varphi\in [0,2\pi)}$ parametrizes the boundary of the 0-cell $\tilde V_0$ in polar coordinates, and $R_0(0)$ is the distance
from the nucleus to the boundary in the direction of the displaced origin, now at coordinates $(\|x_0\|,0)$. 
Fig. \ref{fig: illustration-rotation} shows realizations of the typical cell, the zero-cell and their displaced version when $\Phi$ is a Poisson point process.

The areas of the two cells are obtained as
\[ |\tilde V(o)|=\frac12 \int_0^{2\pi} R^2(\varphi)\dd\varphi \]
and
\[ |\tilde V_0|=\frac12 \int_0^{2\pi} R_0^2(\varphi)\dd\varphi, \]
respectively, and the mean areas follow as
\[ \E |\tilde V(o)|=\int_0^{\pi} \E(R^2(\varphi))\dd\varphi \]
and
\[ \E |\tilde V_0|=\int_0^{\pi} \E(R_0^2(\varphi))\dd\varphi, \]
where $|\cdot |$ is the Lebesgue measure in two dimensions. Integrating over $[0,\pi)$ is sufficient due to the symmetry of the distributions, $i.e.,$ $\E R(\varphi) \equiv \E R(-\varphi)$.

\begin{definition}[Uniform-angled radius]
We define the uniform-angled radius $\bar{R}$ to the boundary $\partial \tilde V(o)$ of the typical cell by
\[ \bar{R} \triangleq R(\Theta) \]
and the uniform-angled radius $\bar{R}_0$ to the boundary $\partial \tilde V_0$ of the 0-cell by
\[ \bar{R}_0 \triangleq R_0(\Theta)\]
where $\Theta$ is distributed as ${\rm{Uniform}}[0,2\pi].$
\end{definition}
Since $\Phi$ is motion-invariant, we may equivalently define $ \bar{R} \triangleq \|\partial V(o)\cap (\R^+,0) \|$ and  $ \bar{R}_0 \triangleq \|\partial V_0\cap (\R^+,0) \|$.

$R$ and $\bar{R}$ are related by
\begin{equation}
\E (\bar{R}^b)=\frac{1}{\pi} \int_0^{\pi} \E( R^b(\varphi))\dd\varphi, 
\end{equation}
and
\begin{equation}
\E (\bar{R}_0^b)=\frac{1}{\pi} \int_0^{\pi} \E( R_0^b(\varphi))\dd\varphi,
\end{equation}
for $b\in\mathbb{R}$. Again, integrating over $[0,\pi)$ is sufficient due to the symmetry.

\begin{lemma}
\label{lemma: |x_0|/r(x)}
For all point processes where $V(o)$ and $V_0$ exist and are finite almost surely, we have
\begin{equation}
\P(\|z\|/R(0) \leq t)=t^2, \quad t\in [0,1],
\end{equation}
and
\begin{equation}
\P(\|x_0\|/R_0(0) \leq t)=t^2, \quad t\in [0,1].
\end{equation}
\end{lemma}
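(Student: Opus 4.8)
The plan is to reduce both statements to a single deterministic computation about a uniformly random point in a star-shaped set. First I would observe that the explicit rotations in the definitions do not affect the ratios in question. Since $\rot_{-\zeta}$ maps the direction $\zeta$ of $z$ onto the positive $x$-axis while preserving lengths, the boundary point of $\tilde V(o)$ at angle $0$ is the image of the boundary point of $V(o)$ at angle $\zeta$, so $R(0)=R_{V(o)}(\zeta)$, where $R_{V(o)}(\cdot)$ is the directional radius of the unrotated typical cell measured from its nucleus $o$; hence $\|z\|/R(0)=\|z\|/R_{V(o)}(\zeta)$. Likewise, the displacement-and-rotation taking $\Phi_{-x_0}$ to $\tilde\Phi_0$ sends the displaced origin $o-x_0=-x_0$ onto the positive $x$-axis, so $R_0(0)=R_{V_0}(\varphi_{-x_0})$ and $\|x_0\|/R_0(0)=\|o-x_0\|/R_{V_0}(\varphi_{-x_0})$. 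Both ratios therefore take the common form $\|W\|/R_C(\varphi_W)$, where $C$ is the cell recentered at its nucleus, $(\|W\|,\varphi_W)$ are the polar coordinates of an interior point $W$, with $W=z$ for the typical cell and $W=o-x_0$ for the $0$-cell.

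The key step is the following deterministic fact. Fix a bounded set $C$ that is star-shaped with respect to the origin, with directional radius $R_C(\varphi)$, and let $W$ be uniform on $C$. In polar coordinates $W=(s,\varphi)$ has density $s/|C|$ on $\{0\le s\le R_C(\varphi)\}$. Substituting $u=s/R_C(\varphi)$ at fixed $\varphi$ yields the joint density $u\,R_C^2(\varphi)/|C|$ of $(u,\varphi)$ on $[0,1]\times[0,2\pi)$, and integrating out $\varphi$ gives the marginal density
\[ f(u)=\frac{u}{|C|}\int_0^{2\pi} R_C^2(\varphi)\,\dd\varphi = 2u,\qquad u\in[0,1], \]
where I used $\int_0^{2\pi} R_C^2(\varphi)\,\dd\varphi = 2|C|$ (the area formula already recorded above). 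Thus $\|W\|/R_C(\varphi_W)$ has CDF $t^2$ on $[0,1]$, and, crucially, this law is independent of the shape $C$. The almost-sure finiteness of the cells guarantees $|C|<\infty$, so this normalization is valid.

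It then remains to identify $W$ as a uniform point. For the typical cell this is immediate: $z$ is uniform on $V(o)$ by construction, so conditioning on $\Phi^o$ and applying the fact gives $\P(\|z\|/R(0)\le t\mid\Phi^o)=t^2$; as the right-hand side is deterministic, the unconditional law is $t^2$. The main obstacle is the $0$-cell, where one must show that, conditioned on its shape, the displaced origin $o-x_0$ is uniform over the recentered cell. I would establish this by mass transport (Campbell--Mecke): writing the target as $\E\big[\sum_{x\in\Phi}\one(o\in V(x))\,h(V(x)-x)\,k(o-x)\big]$ (only $x=x_0$ contributes) and shifting each summand by $-x$ gives, for test functions $h,k$,
\[ \E\big[h(V_0-x_0)\,k(o-x_0)\big]=\lambda\,\E^o\!\Big[h(V(o))\!\int_{V(o)}\!k(u)\,\dd u\Big], \]
with $\E^o$ the Palm expectation and $\lambda$ the intensity. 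This identity shows that, given the (area-biased) cell shape, $o-x_0$ is uniform over the cell, so the deterministic fact applies verbatim and yields $\P(\|x_0\|/R_0(0)\le t)=t^2$. The only point requiring care is that $C$ be star-shaped with respect to its nucleus, which holds since Voronoi cells are convex and contain their nucleus.
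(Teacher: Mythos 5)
Your proof is correct, and for the typical cell it coincides with the paper's: the paper observes that $\{\|z\|/R(0)\le t\}$ is exactly the event that $z$ falls in the copy of $V(o)$ scaled by $t$ about the nucleus, which has area $t^2|V(o)|$ for every realization --- the similar-polygon version of your polar-coordinate computation $f(u)=2u$, using the same area identity $\int_0^{2\pi}R_C^2(\varphi)\,\dd\varphi=2|C|$. Where you genuinely diverge is the zero-cell. The paper disposes of it with ``the same argument holds for the zero-cell,'' which tacitly uses that, given the shape of the recentered cell $V_0-x_0$, the displaced origin $o-x_0$ is uniformly distributed in it; this is not obvious (conditioned on the full realization the origin is a deterministic point, not a uniform one), and your Campbell--Mecke identity $\E\big[h(V_0-x_0)\,k(o-x_0)\big]=\lambda\,\E^o\big[h(V(o))\int_{V(o)}k(u)\,\dd u\big]$ supplies exactly the missing justification, simultaneously exhibiting the zero-cell as the area-biased typical cell (consistent with the paper's later Eq.~(13)) and the origin as conditionally uniform given the shape. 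This buys rigor the paper's one-line dismissal lacks, and it also surfaces a hypothesis the lemma's phrasing suppresses: your transport step uses stationarity (Palm calculus with finite intensity $\lambda$), and some such assumption is genuinely needed for the zero-cell claim --- for a deterministic point pattern, where $V_0$ exists and is finite a.s., the ratio $\|x_0\|/R_0(0)$ is a constant, so its CDF cannot be $t^2$; since the section fixes $\Phi$ motion-invariant, your argument is valid in the paper's setting. Your preliminary reductions through the rotations ($R(0)=R_{V(o)}(\zeta)$ and $R_0(0)=R_{V_0}(\varphi_{-x_0})$, both length-preserving) and the star-shapedness remark (immediate from convexity of Voronoi cells) are all correct.
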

\begin{proof}
For any point process, conditioned on $V(o)$, let $z$ be uniform randomly distributed in $V(o)$. The probability that $\|z\|/R(0)<t$ is the same as the probability that $z$ falls into the similar polygon of $V(o)$, with radius scaled by $t$ in all directions. This probability is equal to $t^2$ for any realization of $V(o)$. The same argument holds for the zero-cell.
\end{proof}

\begin{remark}
Lemma \ref{lemma: |x_0|/r(x)} holds for non-stationary point processes also,  where the typical cell is centered at the origin.
\end{remark}

\subsection{The Typical Cell of the PVT}
Let $\Phi\subset\mathbb{R}^2$ be a Poisson point process of intensity $\lambda$.
\begin{lemma}
\label{lemma: typical cell,unif angle}
The probability density function (pdf) of $\bar{R}$ is
\begin{equation}
f_{\bar{R}}(r)= 2\lambda\pi  r e^{-\lambda \pi r^2}.
\end{equation}
\end{lemma}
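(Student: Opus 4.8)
The plan is to exploit the alternative description of $\bar{R}$ stated just above the lemma, namely $\bar{R} = \|\partial V(o)\cap(\R^+,0)\|$, the distance from the nucleus $o$ to the boundary of its Voronoi cell measured along the positive $x$-axis. Rather than aiming at the density directly, I would first compute the complementary cdf $\P(\bar{R}>r)$ and then differentiate.

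The key geometric step is to turn the event $\{\bar{R}>r\}$ into a void event for $\Phi$. The point $(r,0)$ lies in the cell $V(o)$ precisely when $o$ is the nearest nucleus to it, i.e. $r=\|(r,0)-o\|<\|(r,0)-x\|$ for every $x\in\Phi$. Equivalently, no point of $\Phi$ lies in the open disk $B((r,0),r)$ of radius $r$ centered at $(r,0)$ — the disk through $o$ that is tangent to the $y$-axis there. Since $V(o)$ is convex, hence star-shaped with respect to $o$, the admissible radii along this ray form the interval $[0,\bar{R}]$, so that $\{\bar{R}>r\}=\{\Phi\cap B((r,0),r)=\emptyset\}$ up to the null event $\{\bar R=r\}$.

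Next I would discharge the Palm conditioning. By Slivnyak's theorem $\Phi^o\setminus\{o\}$ is again a PPP of intensity $\lambda$, and because $o$ sits exactly on the bounding circle of $B((r,0),r)$ (at distance $r$ from its center) it is irrelevant to the open-disk void event. The void probability of the PPP then gives $\P(\bar{R}>r)=\exp(-\lambda\,|B((r,0),r)|)=e^{-\lambda\pi r^2}$, using $|B((r,0),r)|=\pi r^2$. Differentiating $F_{\bar{R}}(r)=1-e^{-\lambda\pi r^2}$ yields the claimed density $f_{\bar{R}}(r)=2\lambda\pi r\,e^{-\lambda\pi r^2}$.

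I expect the only subtle point to be the geometric identification of the void region: correctly recognizing that the nearest-nucleus condition along a fixed ray collapses to the emptiness of a single moving disk, and checking that the boundary point $o$ can be ignored. Everything afterward is the standard void-probability calculation. As a sanity check, this is exactly the contact-distance law of the PPP, which is reassuring, since the fixed-direction cell radius and the distance to the nearest point should coincide in distribution here.
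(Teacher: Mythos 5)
Your proof is correct and takes essentially the same approach as the paper's: both reduce the event $\{\bar{R}>r\}$ to the void event $\Phi\cap b((r,0),r)=\emptyset$ for the disk through $o$ centered at $(r,0)$, and then apply the PPP void probability to obtain $\P(\bar{R}>r)=e^{-\lambda\pi r^2}$. You merely spell out the Slivnyak step and the star-shapedness of the cell that the paper's one-line proof leaves implicit, which is fine.
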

\begin{proof}
Due to the isotropy of the Poisson process, it is sufficient to consider $\bar{R} = \|\partial V(o)\cap (\R^+,0) \|$. The event that $\bar{R}$ is larger than $r$ happens if $b((\bar{R},0),r)$\footnote{The open  ball with center $(R,\phi)$ (in polar coordinates) and radius $r\geq0$ is denoted by $b((R,\phi), r)$.} contains no point. Thus, $\P(\bar{R}>r)= e^{-\lambda \pi r^2}.$
\end{proof}

\begin{remark}
The mean area of the typical cell follows as
\[ \E |V(o)|= \pi \E (\bar{R}^2)=\frac{1}{\lambda}.\]
\end{remark}
Recall that in \cite{hayen2002areas}, the mean area of the typical cell is obtained by using Robbin's formula \cite{robbins1944} and that for any fixed point $p=(r,\theta)$, $\P(p\in V(o)) = \exp(-\lambda\pi r^2)$,
$\E |V(o)|=\int_{\mathbb{R}^{2}} \P\left(p \in V(o)\right) \mathrm{d} p=\int_{0}^{2 \pi} \int_{0}^{\infty} \exp \left(-\lambda \pi r^{2}\right) r \mathrm{d} r \mathrm{d} \theta=\frac{1}{\lambda}.$
Our method and the method in \cite{hayen2002areas} for calculating the mean area are essentially the same, by observing that the event that $\bar{R}$ is larger than $r$ happens if and only if a fixed point $(r,0)\in V(o)$. Its probability does not depend on $\theta$.  The result for the mean area holds for arbitrary stationary point processes \cite{gilbert1962random}.
\begin{figure}[t]
    \centering
    \includegraphics[width = 0.8\columnwidth]{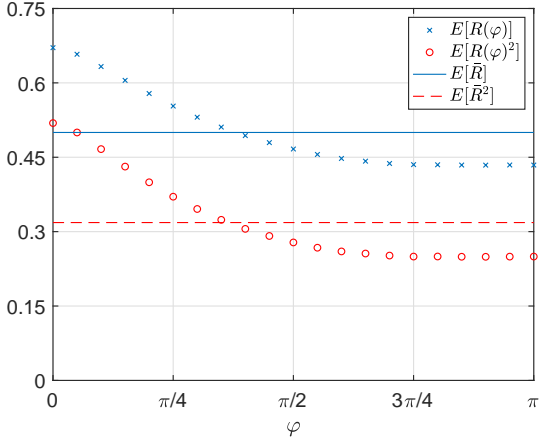}
    \caption{First two moments of the directional distances in the typical cell, $\lambda = 1$, via simulation. The mean and second moment of $\bar{R}$ (straight lines) are obtained via Lemma \ref{lemma: typical cell,unif angle}.}
    \label{fig:moment-typical}
\end{figure}

Fig. \ref{fig:moment-typical} shows the first two moments of the directional radius in the typical cell obtained via simulation. It is apparent that the cell is significantly larger in the direction of the randomly chosen point than in the opposite direction. $R(0)$ is on average about $55\%$ larger than $R(\pi)$.

\subsection{The 0-cell of the PVT}
Recall that $D_0=\|x_0\|$ is the distance from the nucleus of the 0-cell to the origin.

\begin{theorem}
\label{theorem: 0-cell,unif loc}
The joint pdf of $D_0, R_{0}(\varphi)$ for  $\varphi\in[0,\pi)$ is 
\begin{align}
\label{eq:joint-pdf-R_phi}
     f_{D_0,R_0(\varphi)}(x,y)
      & = 2\lambda \pi x \exp{\big(-\lambda\pi (x^2+y^2)+\lambda S(\varphi,x,y)\big)}\nonumber\\
      &\times\bigg(2\lambda\pi y -\lambda \frac{\partial S(\varphi,x,y)}{\partial y}\bigg),
\end{align}
for $x\geq 0,~y\geq 0$ when $\varphi\neq 0$, and for $y\geq x \geq 0$ when $\varphi= 0$, and 
\begin{align}
    S(\varphi,x,y) &= (\pi-\varphi)x^2-xy\sin{\varphi}\nonumber\nonumber\\
    &+(y^2-x^2)\arccos{\frac{y-x\cos{\varphi}}{\sqrt{x^2+y^2-2xy\cos{\varphi}}}}.
    \label{eq:intersection}
\end{align}
\end{theorem}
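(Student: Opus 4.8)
The plan is to compute the joint law by first conditioning on $D_0$ and then writing the survival function of $R_0(\varphi)$ as a void probability of the Poisson process. By the isotropy of $\Phi$, I work directly in the displaced-and-rotated frame $\tilde\Phi_0$, in which the nucleus sits at $o$ and the displaced origin $u$ sits at the Cartesian point $(x,0)$ when $D_0=x$. The nearest-point property of $x_0$ translates into the statement that, conditionally on $D_0=x$, the open disk $b(u,x)$ is empty while the remaining points of $\tilde\Phi_0$ form a PPP of intensity $\lambda$ on $\R^2\setminus b(u,x)$; together with the classical nearest-neighbor law this gives the marginal $f_{D_0}(x)=2\lambda\pi x e^{-\lambda\pi x^2}$.

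Next I would convert the radial event into a void event. Since Voronoi cells are star-shaped about their nucleus, $R_0(\varphi)>y$ is equivalent to $(y,\varphi)\in\tilde V_0$, i.e.\ to $o$ being the closest nucleus to the point $(y,\varphi)$; this holds iff the open disk $b((y,\varphi),y)$, which passes through $o$, contains no further point of the process. Because $b(u,x)$ is already empty, the only additional region that must be void is $b((y,\varphi),y)\setminus b(u,x)$, whose area is $\pi y^2-A(\varphi,x,y)$, where $A$ is the area of the lens $b((y,\varphi),y)\cap b(u,x)$. Hence $\P(R_0(\varphi)>y\mid D_0=x)=\exp(-\lambda(\pi y^2-A))$. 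Differentiating in $y$ gives the conditional density $\lambda(2\pi y-\partial_y A)\exp(-\lambda\pi y^2+\lambda A)$, and multiplying by $f_{D_0}(x)$ reproduces the claimed product form with $S=A$.

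It then remains to identify the lens area $A$ with $S(\varphi,x,y)$. The key observation is that both defining circles pass through the nucleus $o$: the disk $b((y,\varphi),y)$ has center at distance $y$ from $o$ and radius $y$, and $b(u,x)$ has center $u$ at distance $x$ from $o$ and radius $x$. Writing $d=\sqrt{x^2+y^2-2xy\cos\varphi}$ for the distance between the two centers and applying the standard two-disk lens formula, the two arccosine arguments simplify to $(y-x\cos\varphi)/d$ and $(x-y\cos\varphi)/d$, while the radical term equals twice the area of the triangle with vertices $o$ and the two centers, namely $xy\sin\varphi$. Reading the two arccosines as the angles of that triangle at the centers and its angle at $o$ as exactly $\varphi$, the angle-sum relation $\arccos\frac{y-x\cos\varphi}{d}+\arccos\frac{x-y\cos\varphi}{d}=\pi-\varphi$ collapses the expression to $S(\varphi,x,y)$. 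I expect this geometric reduction, establishing that the circles share the nucleus and carrying out the lens/triangle bookkeeping (including the internally tangent degenerate case $\varphi=0$, which also forces the support constraint $y\geq x$ since the in-cell point $u$ lies at distance $x$ along direction $0$), to be the main obstacle; the probabilistic steps are routine once the void-probability reduction is in place.
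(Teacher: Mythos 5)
Your proposal is correct and follows essentially the same route as the paper's proof: condition on $D_0=x$, note that $b((x,0),x)$ is empty so $\{R_0(\varphi)>y\}$ reduces to the void event on $b((y,\varphi),y)\setminus b((x,0),x)$, differentiate the conditional ccdf, and multiply by $f_{D_0}(x)=2\lambda\pi x e^{-\lambda\pi x^2}$. The only difference is that you additionally derive the closed form of the lens area $S(\varphi,x,y)$ via the two-disk formula and the angle-sum identity $\arccos\frac{y-x\cos\varphi}{d}+\arccos\frac{x-y\cos\varphi}{d}=\pi-\varphi$ (together with the $\varphi=0$ tangency case giving the support $y\geq x$), a computation the paper states without proof---and your verification of it is correct.
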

\begin{proof}
The event $R_0(\varphi)>y$ given $\|x_0\|=x$ is equivalent to there being no point in $b((y,\varphi),y)\setminus b((x,0),x) $. Hence
\begin{equation}
\label{eq:ccdf-D(phi)}
    \P(R_0(\varphi)>y\mid D_0 = x) = \exp{\big(-\lambda(\pi y^2 -  S(\varphi,x,y))\big)}.
\end{equation}
where $S(\varphi,x,y)$ in (\ref{eq:intersection}) is the area of the intersection of $b((x,0),x)$ and $b((y,\varphi),y)$, $i.e.,$ $S(\varphi,x,y) = |b((x,0),x) \cap b((y,\varphi),y)|$.

Hence the conditional pdf of $R_0(\varphi)$ given $D_0$ is
\begin{align}
    &f_{R_0(\varphi)\mid D_0}(y\mid x) \nonumber\\
   & = \exp{\big(-\lambda\pi y^2 +\lambda S(\varphi,x,y)\big)}\bigg(2\lambda\pi y -\lambda \frac{\partial S(\varphi,x,y)}{\partial y}\bigg).
\end{align}
From the void probability of the PPP we know that
\[f_{D_0}(x) = 2\lambda\pi x\exp(-\lambda \pi x^2).\]
Applying the Bayesian rule $ f_{D_0, R_0(\varphi) D_0}(x,y)= f_{R_0(\varphi)\mid D_0}(y\mid x) f_{D_0}(x)$ we obtain (\ref{eq:joint-pdf-R_phi}). \end{proof}
Fig. \ref{fig:intersection} illustrates the directional radius $R_0(\varphi)$ and the intersection region.

\begin{figure}[t]
    \centering
    \includegraphics[width = 0.8\columnwidth]{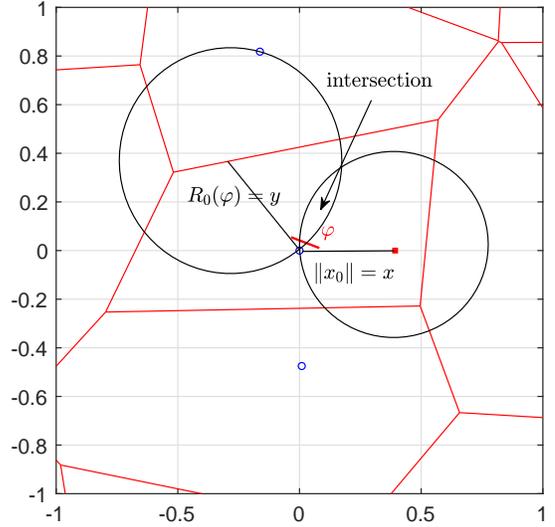}
    \caption{Illustration of the intersection between $b((x,0),x)$ and $b((y,\varphi),y)$ whose area is $S(\varphi,x,y)$.}
    \label{fig:intersection}
\end{figure}
\begin{remark}
Integrating (\ref{eq:joint-pdf-R_phi}) over $x$ we obtain the distribution for $R_0(\varphi),\varphi\in[0,\pi]$. A straightforward extension of Theorem \ref{theorem: 0-cell,unif loc} is the joint distribution of $R_0(\varphi_1),R_0(\varphi_2),D_0$ for $\varphi_1\in[0,\pi],\varphi_2\in[0,\pi]$, which involves the intersection of three open balls. Such an extension is useful when evaluating the second moment of $|\tilde{V}_0|$ but is omitted here.
\end{remark}

\begin{corollary}
\label{cor: pdf, R_0}
The pdf of $R_{0}(0)$ is
\begin{equation}
  f_{R_0(0)}(y)
   = 2(\lambda\pi)^2y^3\exp{(-\lambda \pi y^2)},
\end{equation}
and the pdf of $R_0(0) - D_0$ is 
\begin{equation}
f_{R_0(0) - D_0}(y) = \sqrt{\lambda}\pi \erfc{(y\sqrt{\lambda\pi})}.
\end{equation}
The pdf of $R_{0}(\pi)$ is
\begin{equation}
 f_{R_0(\pi)}(y) = 2\lambda\pi y\exp{(-\lambda\pi y^2)}.
\end{equation}
Further, $D_0$ and $ R_0(\pi)$ are independent and identically distributed (iid).
\end{corollary}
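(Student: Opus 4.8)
The plan is to specialize Theorem \ref{theorem: 0-cell,unif loc} to the two directions $\varphi=0$ and $\varphi=\pi$, where the intersection area $S(\varphi,x,y)$ collapses to something trivial, and then marginalize. The crucial first step is a purely geometric evaluation of $S$. At $\varphi=0$ the balls $b((y,0),y)$ and $b((x,0),x)$ are internally tangent at the origin, and on the relevant domain $y\geq x$ the smaller is contained in the larger, so $S(0,x,y)=\pi x^2$; formally this drops out of (\ref{eq:intersection}) since $\sin 0=0$, $\cos 0=1$, and $\arccos\frac{y-x}{|y-x|}=0$ for $y>x$. At $\varphi=\pi$ the two balls lie on opposite sides of the origin and meet only there, so $S(\pi,x,y)=0$; again this follows from (\ref{eq:intersection}) because the $\arccos$ argument becomes $\frac{x+y}{|x+y|}=1$. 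In both cases $S$ is independent of $y$, so $\partial S/\partial y=0$, which is what makes the resulting densities clean.

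Substituting $S(0,x,y)=\pi x^2$ into (\ref{eq:joint-pdf-R_phi}) reduces the exponent $-\lambda\pi(x^2+y^2)+\lambda\pi x^2$ to $-\lambda\pi y^2$ and the bracket to $2\lambda\pi y$, giving the joint density $f_{D_0,R_0(0)}(x,y)=4\lambda^2\pi^2 xy\,e^{-\lambda\pi y^2}$ on $0\leq x\leq y$. Integrating out $x$ over $[0,y]$ contributes a factor $y^2/2$ and yields $f_{R_0(0)}(y)=2(\lambda\pi)^2 y^3 e^{-\lambda\pi y^2}$. For $R_0(\pi)$, substituting $S=0$ leaves $f_{D_0,R_0(\pi)}(x,y)=\big(2\lambda\pi x\,e^{-\lambda\pi x^2}\big)\big(2\lambda\pi y\,e^{-\lambda\pi y^2}\big)$, a product of two identical Rayleigh densities. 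This factorization simultaneously establishes that $D_0$ and $R_0(\pi)$ are independent, that each has the density $2\lambda\pi t\,e^{-\lambda\pi t^2}$ (recall this is $f_{D_0}$), and hence the stated marginal $f_{R_0(\pi)}(y)=2\lambda\pi y\,e^{-\lambda\pi y^2}$; the iid claim is then immediate.

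For the density of $W=R_0(0)-D_0$ I would apply the change of variables $(x,y)\mapsto(w,y)$ with $w=y-x$ (unit Jacobian) to $4\lambda^2\pi^2 xy\,e^{-\lambda\pi y^2}$, obtaining the joint density $4\lambda^2\pi^2 (y-w)y\,e^{-\lambda\pi y^2}$ on $0\leq w\leq y$, and then integrate over $y\in[w,\infty)$. Writing $a=\lambda\pi$, the integrand splits into $\int_w^\infty y^2 e^{-ay^2}\dd y$ and $-w\int_w^\infty y e^{-ay^2}\dd y$; integration by parts on the first converts the polynomial weight into the Gaussian tail $\int_w^\infty e^{-ay^2}\dd y=\frac{\sqrt\pi}{2\sqrt a}\erfc(w\sqrt a)$, while its boundary term $\frac{w}{2a}e^{-aw^2}$ cancels exactly against the second integral, which equals $-\frac{w}{2a}e^{-aw^2}$. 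What survives is precisely $\sqrt a\,\sqrt\pi\,\erfc(w\sqrt a)=\sqrt\lambda\,\pi\,\erfc(w\sqrt{\lambda\pi})$.

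The only place requiring genuine care is this final cancellation, where the boundary term from integration by parts must annihilate the $-w\int_w^\infty y e^{-ay^2}\dd y$ contribution so that only the $\erfc$ term remains; everything else is a direct substitution of the two special values of $S$. Accordingly, I expect the substantive content to be the two-line geometric verification that the $\arccos$ term in (\ref{eq:intersection}) vanishes on the domains $y\geq x$ (for $\varphi=0$) and $x,y\geq 0$ (for $\varphi=\pi$), with the remaining integrals being routine.
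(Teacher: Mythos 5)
Your proposal is correct and follows essentially the same route as the paper's Appendix A: specialize Theorem \ref{theorem: 0-cell,unif loc} at $\varphi=0$ (where $S(0,x,y)=\pi x^2$ on $y\geq x$, which you verify explicitly while the paper just states the resulting joint density) and at $\varphi=\pi$ (where $S=0$ forces the Rayleigh--Rayleigh factorization and hence the iid claim), then marginalize. Your derivation of $f_{R_0(0)-D_0}$ via the unit-Jacobian substitution $w=y-x$ followed by integrating out $y$ is just the mirror image of the paper's computation, which conditions on $D_0=x$ and integrates out $x$ from the joint density $(2\lambda\pi)^2 x(x+y)e^{-\lambda\pi(x+y)^2}$; both reduce to the same integration-by-parts cancellation leaving $\sqrt{\lambda}\,\pi\,\erfc(y\sqrt{\lambda\pi})$.
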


\begin{proof}
See Appendix A.
\end{proof}

From Corollary \ref{cor: pdf, R_0}, we obtain $\E(R_0(0))=3/(4\sqrt\lambda)$,
$\E(R_0(\pi))=1/(2\sqrt\lambda)$, and
$\E({R_0(0)-D_0})=1/(4\sqrt{\lambda}).$ Thus, $R_0(0)$ is on average exactly $50\%$ larger than $R_0(\pi)$.

The correlation coefficient of $R_0$ and $R_0(0)-D_0$ follows as 
\[ \rho_{R_0,R_0(0)-D_0} = \frac{8-3\pi}{\sqrt{12-3\pi}\sqrt{16-3\pi}} \approx -0.3462 .\]
Also, $\E((R_0(0)-D_0)/D_0)=1$, but $\E(R_0(0)-D_0)/\E(D_0)=1/2$.

\begin{corollary}
\label{theorem: 0-cell,unif angle}
The pdf of $\bar{R}_0$ is
\begin{equation}
f_{\bar{R}_0}(y) = \frac{1}{\pi}\int_{0}^{\pi} f_{R_0(\varphi)}(y) \dd    \varphi.
\end{equation}
\end{corollary}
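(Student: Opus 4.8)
The plan is to recognize $\bar R_0 = R_0(\Theta)$ as a continuous mixture of the fixed-angle radii $R_0(\varphi)$ over the uniformly random direction $\Theta$, and then to use the reflection symmetry of the displaced $0$-cell to fold the angular integral onto $[0,\pi)$; this is simply the density-level analogue of the moment relation for $\bar R_0$ stated earlier. Concretely, I would first condition on $\Theta$. Because $\Theta\sim\mathrm{Uniform}[0,2\pi]$ is drawn independently of the point process $\Phi$, the conditional density of $\bar R_0$ given $\Theta=\varphi$ is exactly $f_{R_0(\varphi)}$, and the law of total probability yields
\[ f_{\bar R_0}(y)=\int_0^{2\pi} f_{R_0(\varphi)}(y)\,\frac{1}{2\pi}\,\dd\varphi. \]

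Second, I would invoke the symmetry already noted before the corollary. After the translation and rotation that place the nucleus at $o$ and the displaced origin at $(\|x_0\|,0)$, the positive $x$-axis is the only distinguished direction; since the underlying PPP is isotropic, reflecting the whole configuration across the $x$-axis leaves its law unchanged. Hence $R_0(\varphi)$ and $R_0(-\varphi)=R_0(2\pi-\varphi)$ are equal in distribution, so $f_{R_0(\varphi)}=f_{R_0(2\pi-\varphi)}$ and the integrand is symmetric about $\varphi=\pi$. Folding the range gives $\int_0^{2\pi}=2\int_0^{\pi}$, and the factor $2$ cancels one $\pi$ in $1/(2\pi)$, producing $\frac{1}{\pi}\int_0^{\pi}f_{R_0(\varphi)}(y)\,\dd\varphi$ as claimed.

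Both ingredients are elementary, so I do not expect a genuine obstacle. The only point needing a word of care is the independence of $\Theta$ from $\Phi$, which guarantees that the conditional density is truly $f_{R_0(\varphi)}$ and which is immediate from the definition of the uniform-angled radius. A more laborious alternative would be to match all moments $b$ of the mixture against the known moment relation for $\bar R_0$ and argue that this forces the densities to coincide, but the direct conditioning argument is cleaner and self-contained.
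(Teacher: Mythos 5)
Your proof is correct and is essentially the paper's own argument: the paper's one-line proof (``combine $\Theta\sim\mathrm{Uniform}[0,2\pi]$ and Theorem~1'') is exactly your mixture-over-$\Theta$ step, with the fold of $[0,2\pi]$ onto $[0,\pi]$ via reflection symmetry left implicit (the paper notes this symmetry just after defining the uniform-angled radius). You have simply made the conditioning and symmetry steps explicit, which is fine.
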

\begin{proof}
Combine $\Theta\sim {\rm uniform}[0,2\pi]$ and Theorem \ref{theorem: 0-cell,unif loc}.
\end{proof}
Corollary \ref{theorem: 0-cell,unif angle} immediately leads to $\E\bar{R}_0 =  \E [\int_{0}^{\pi} R_0(\varphi) \dd    \varphi]/\pi = 0.5753/\sqrt{\lambda}$.

\begin{remark}
The mean area of the 0-cell is
\[ \E |\tilde V_0|=\int_0^{\pi} \E(R_0^2(\varphi))\dd\varphi = \frac{1.280176}{\lambda} . \]
Further,
\begin{equation}
\label{eq: c_phi}
    \E(R_0^2(\varphi)) \approx \frac{c(\varphi)}{\lambda\pi},
\end{equation}
where $c(\varphi)=1+\exp(-\varphi^{3/2})$, is a good approximation to the second moment of the directional radius. It gives a mean area of 
\[ 1+\frac{2\Gamma_{\rm inc}(2/3,\pi^{3/2})}{3\pi} \approx 1.2869 ,\]
where $\Gamma_{\rm inc}(a,z)=\int_0^z e^{-t} t^{a-1}\dd t$ is the lower incomplete gamma function\footnote{In Matlab, $\Gamma_{\rm inc}(2/3,\pi^{3/2})$ is
expressed as \texttt{gammainc(pi\^{ }1.5,2/3)*gamma(2/3)}.}. 
\end{remark}
\begin{figure}[t]
    \centering
    \includegraphics[width = 0.8\columnwidth]{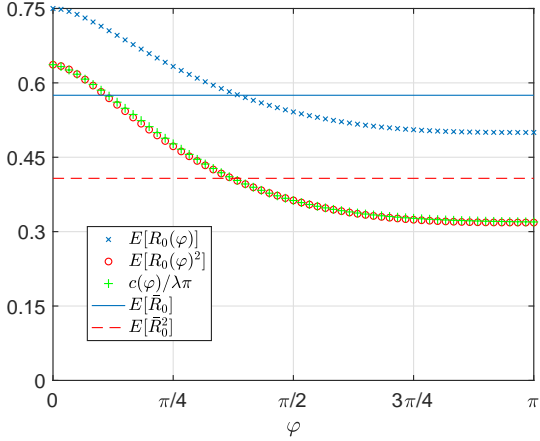}
        \caption{First two moments of the directional radius $R_0(\varphi)$ via Theorem \ref{theorem: 0-cell,unif loc} and the uniform-angled radius  $\bar{R}_0$ in the 0-cell, $\lambda=1$. The green curve, $c(\varphi)/\lambda\pi$, is given in (\ref{eq: c_phi}).}
 \label{fig:moment-0}
\end{figure}
Fig. \ref{fig:moment-0} shows the first two moments of $R_0(\varphi),~\varphi\in[0,\pi]$ and $\bar{R}_0$; it also shows the approximation $\E R^2_0(\phi)\sim c(\varphi)/(\lambda\pi)$ is quite good. This new approach for evaluating the mean area is easy to understand. By comparison, the existing approach is based on the first two moments of the area of the typical cell and the statistical relation between $V_0$ and $V(o)$  \cite{gilbert1962random,hayen2002areas}, which we discuss in the next subsection. 
 
\subsection{Relation of the Typical Cell and the 0-Cell}
Fundamentally, the typical cell and the zero-cell are related by \cite{MECKE19991645}
\begin{equation}
\E \big[f(V_0)\big] =\frac{ \E^o\big[ |V(o)| f(V(o))\big]}{\E^o\big[|V(o)|\big]},
\label{eq:V_0,V(o)}
\end{equation}
where $f$ is any translation-invariant non-negative function on compact sets, and $\E ^o$ denotes the expectation with respect to the Palm distribution \cite{haenggi2012stochastic}. In words, a translation-invariant statistic of the 0-cell is that of the typical cell weighted by volume (area in 2D). Letting $f(\cdot) = |\cdot|$, the mean area of the zero-cell is
\begin{equation}
\E \big[|V_0|\big] =\lambda \E^o\big[ |V(o)|^2\big].
\end{equation}
Using Robbin's formula, $
    \E^o\left(|V(o)|^2\right)=\int_{\mathbb{R}^{4}} \P\left(x_{0}, x_{1} \in V(o) \right) \mathrm{d} x_{0} \mathrm{d} x_{1}$ \cite{gilbert1962random} \cite{hayen2002areas}. It is apparent that the 0-cell is not just the typical cell enlarged by $28\%$. In fact, larger cells in the PVT are associated with being more circular and having more sides \cite{large-cell04}.
To compare the typical cell and the 0-cell, we consider the number of sides of the typical cell and the 0-cell, denoted by $N$ and $N_0$. We have $\E N_0 =\lambda \E^o[ |V(o)| N ]\geq \E N=6 $ due to the positive correlation between the area and number of sides of Poisson Voronoi cells \cite[Chap 9]{chiu2013stochastic}. 
Table \ref{tabel:mean} shows some mean values related to the typical cell and the 0-cell for $\lambda=1$. 

\begin{table*}
\centering
\begin{tabular}{ |c||c|c|c|p{3.6cm}|}
 \hline
 Cell Type & Number of Sides & Area & Directional Radius & Distance to $z$/the origin\\
 \hline
 Typical cell  & $\E N =6$  & $\E |V(o)|=1$ &    $\E R(0)= 0.67$ (*), $\E R(\pi)= 0.432$ (*)& $\E D = 0.447$ (*)\\
 Zero-cell & $\E N_0 = 6.41$ (*) & $\E |V_0| \approx 1.28$    & $\E R_0(0)= 0.75$, $\E R_0(\pi)= 0.5$ & $\E D_0=0.5$\\
 \hline
\end{tabular} 
\caption{Some mean values of the typical cell and the zero-cell in the PVT. Results obtained via simulations are marked by (*).}
\label{tabel:mean}
\end{table*}

\subsection{Gamma-Type Results}
We now compare our results with some known distributions. Corollary \ref{cor: pdf, R_0} shows that $\pi R_0^2(0)\sim \Gamma(2,\lambda)$; it is known that $\|x_1\|$, the distance between the origin and its second-nearest point, satisfies $\pi \|x_1\|^2 \sim \Gamma(2,\lambda)$ \cite{haenggi05distances}. Hence $R_0(0)$ and $\|x_1\|$ are identical in distribution. The explanation is as follows: for the PPP, a stopping set defined as the minimum disk containing $n$ Poisson points is $\Gamma(n,\lambda)$ distributed \cite{zuyev_1999}. Further, the probability that a point is covered by a stopping set does not depend on whether it is a point of the process or not. In our cases, both $\pi R_0^2(0)$ and $\pi \|x_1\|^2$ are defined by two Poisson points.

Denote the distance from the typical point on the edge to its closest Poisson point by $R_{\rm e}$ and the distance from the typical point on the Voronoi vertex to its closest Poisson point by $R_{\rm{v}}$. 
It is shown in \cite{muche_2005,baumstark_last_2007} that $\pi R_{\rm{e}}^2 \sim \Gamma({3}/{2},\lambda),$ and $\pi R_{\rm{v}}^2 \sim \Gamma({2},\lambda)$, which gives $f_{R_{\rm{e}}}(r)=4\lambda^{{3}/{2}}\pi r^2 e^{-\lambda\pi r^2},$
and $f_{R_{\rm{v}}}(r)=2(\lambda \pi)^{2} r^{3} e^{-\lambda \pi r^{2}}$. Hence $R_0(0)$ and $R_{\mathrm {v}}$ are identical in distribution.
\begin{figure}
    \centering
    \includegraphics[width = 0.8\columnwidth]{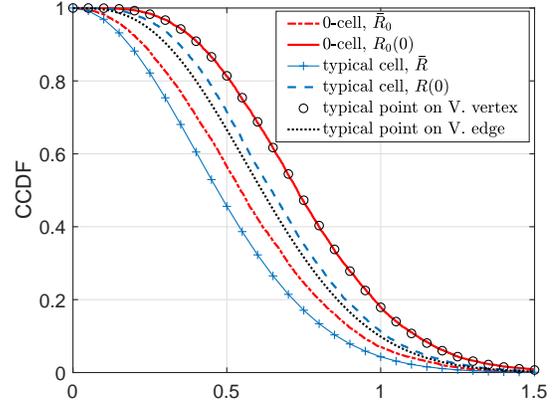}
    \caption{Distribution of distances in the 0-cell and the typical cell, $\lambda=1$.}
    \label{fig:distances dists}
\end{figure}
Fig. \ref{fig:distances dists} shows the  complementary cumulative distribution functions (ccdfs) of the distances given in Lemma \ref{lemma: typical cell,unif angle}, Theorems \ref{theorem: 0-cell,unif loc}, \ref{theorem: 0-cell,unif angle}, and the distributions of $R_{\rm e}$ and $R_{\rm v}$.

\subsection{Discussion and Impact of Cell Asymmetry}
From the results on the directional radii, it is apparent that the Poisson Voronoi cells are, quite surprisingly, rather asymmetric around their nucleus. We summarize them in the facts below.
\begin{figure}
    \centering
    \includegraphics[width = 0.8\columnwidth]{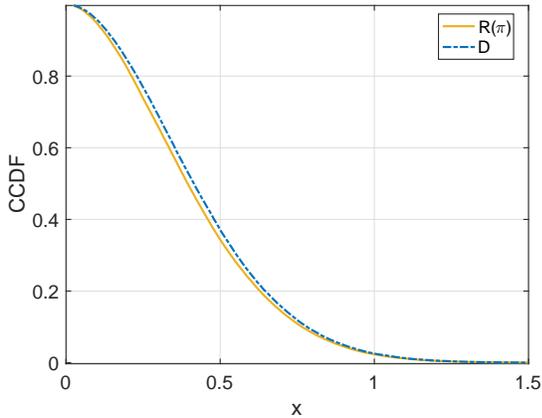}
    \caption{The distribution of $D$ and $R(\pi)$, $\lambda=1$, via simulation.}
    \label{fig:D, R(pi)}
\end{figure}
\begin{fact}
{For the zero-cell, the mean radius in the direction of the typical user is $50\%$ larger than the mean radius in the opposite direction, i.e., $\mathbb{E}\left(R_{0}(0)\right) / \mathbb{E}\left(R_{0}(\pi)\right)=3 / 2 .$ The typical user  is at the same distance as an edge user in
the opposite direction, since $R_{0}(\pi)$ and $D_{0}$ are iid. Further, we can infer from Fig. \ref{fig:moment-0} that about a quarter of edge users (those with $\varphi\geq3\pi/4$) are at essentially the same distance as the typical user. }
\end{fact}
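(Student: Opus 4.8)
The plan is to assemble the Fact from results already in hand, treating it as a collection of consequences of Corollary~\ref{cor: pdf, R_0} together with a reading of Fig.~\ref{fig:moment-0}. First I would verify the ratio $\E(R_0(0))/\E(R_0(\pi))=3/2$ by integrating the densities supplied by Corollary~\ref{cor: pdf, R_0}. Writing
\[ \E(R_0(0)) = 2(\lambda\pi)^2\int_0^\infty y^4 e^{-\lambda\pi y^2}\dd y, \qquad \E(R_0(\pi)) = 2\lambda\pi\int_0^\infty y^2 e^{-\lambda\pi y^2}\dd y, \]
both reduce to standard Gaussian moment integrals and yield $\E(R_0(0))=3/(4\sqrt\lambda)$ and $\E(R_0(\pi))=1/(2\sqrt\lambda)$, whose quotient is exactly $3/2$. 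This is precisely the claimed $50\%$ excess of the radius in the typical-user direction ($\varphi=0$) over that in the opposite direction ($\varphi=\pi$).

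Next I would establish the equal-distance statement by invoking the final clause of Corollary~\ref{cor: pdf, R_0}: $D_0$ and $R_0(\pi)$ are iid. In particular $D_0\stackrel{d}{=}R_0(\pi)$, so the random distance from the nucleus to the typical user coincides in law with the cell radius in the diametrically opposite direction; equivalently, $\E(D_0)=\E(R_0(\pi))=1/(2\sqrt\lambda)$. No new computation is required here---only the observation that the iid conclusion already encodes this equality in distribution.

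Finally, for the ``quarter of edge users'' remark, I would read off from Fig.~\ref{fig:moment-0} that $\E(R_0(\varphi))$ is nearly flat and close to $\E(D_0)$ once $\varphi\gtrsim 3\pi/4$, and then account for the angular measure: by the reflection symmetry $R_0(\varphi)\stackrel{d}{=}R_0(-\varphi)$ about the axis through the typical user, the set of directions with $3\pi/4\le|\varphi|\le\pi$ has measure $\pi/2$, a quarter of the full circle $[0,2\pi)$. The main ``obstacle'' is interpretive rather than technical: both the equal-distance and the quarter-of-users statements are distributional or averaged assertions (equality in law of $D_0$ and $R_0(\pi)$, and closeness of \emph{mean} radii read from the figure), not pointwise identities within a single fixed cell realization, and I would phrase them accordingly so as not to over-claim.
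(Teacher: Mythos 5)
Your proposal is correct and follows essentially the same route as the paper: the $3/2$ ratio comes from the moments $\E(R_0(0))=3/(4\sqrt\lambda)$ and $\E(R_0(\pi))=1/(2\sqrt\lambda)$ computed from the densities in Corollary~\ref{cor: pdf, R_0}, the equal-distance claim is exactly the iid conclusion of that corollary, and the quarter-of-edge-users remark is, as in the paper, a graphical inference from Fig.~\ref{fig:moment-0} combined with the angular measure of $\{\varphi : 3\pi/4\le|\varphi|\le\pi\}$ being $\pi/2$ out of $2\pi$ under the symmetry $R_0(\varphi)\stackrel{d}{=}R_0(-\varphi)$. Your closing caveat about these being distributional rather than per-realization statements is also consistent with the paper's phrasing.
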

\begin{fact}
{For the typical cell, numerical results from Table I suggest that $\E R(\pi)$ is 3\% smaller than $\E D$. The ccdf of $D$ and $R(\pi)$ are plotted in Fig. \ref{fig:D, R(pi)}, which shows that the two curves are almost identical. In the one-dimensional case where $\varphi\in\{0,\pi\}$, the distribution of $R(\pi)$, derived in Appendix B, is identical to the distribution of $D$, derived in \cite[Theorem 1]{mankar2019distance}. Further, we can infer from Fig. \ref{fig:moment-typical} that about a quarter of edge users (those with $\varphi\geq3\pi/4$) are at essentially the same distance as the uniformly random user.}
\end{fact}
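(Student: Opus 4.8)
The stated Fact combines two empirical observations---the roughly $3\%$ gap between $\E R(\pi)$ and $\E D$ and the near-coincidence of the two ccdfs in Fig.~\ref{fig:D, R(pi)}, both of which I would simply read off the simulation data behind Table~\ref{tabel:mean}---with one exact claim that admits a proof: in the one-dimensional analogue ($\varphi\in\{0,\pi\}$), $R(\pi)$ and $D$ are \emph{identically distributed}. The plan is to prove this 1D identity and then to flag that it is special to one dimension, the 2D statement being only approximate.

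\emph{Setting up the 1D model.} By Slivnyak's theorem, the Palm version $\Phi^o$ of a 1D PPP of intensity $\lambda$ is the process with an added point at $o$, and the distances from $o$ to its two nearest neighbors are iid $\mathrm{Exp}(\lambda)$. Hence the typical cell is the interval $[-a,b]$ whose endpoints are the two neighbor \emph{half}-distances, so $a,b$ are iid $\mathrm{Exp}(2\lambda)$. With $z$ the uniform point in the cell, $D=|z|$, and---after the reflection $\rot_{-\zeta}$ that moves $z$ onto the positive axis---$R(\pi)$ is the distance from $o$ to the boundary on the side \emph{opposite} $z$. Conditioned on $(a,b)$ this reads cleanly: with probability $b/(a+b)$ the point falls on the right, giving $D\sim\mathrm{Uniform}[0,b]$ and $R(\pi)=a$; with probability $a/(a+b)$ it falls on the left, giving $D\sim\mathrm{Uniform}[0,a]$ and $R(\pi)=b$.

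\emph{Computing and matching the densities.} Marginalizing this description over $a,b$ iid $\mathrm{Exp}(\mu)$, $\mu=2\lambda$, and using the $a\leftrightarrow b$ symmetry, I would obtain
\begin{equation}
f_D(t)=2\,\E_{a,b}\!\left[\frac{\one(t\le a)}{a+b}\right],\qquad
f_{R(\pi)}(t)=2\mu e^{-\mu t}\,\E_{b}\!\left[\frac{b}{t+b}\right].
\end{equation}
Both inner expectations close in terms of the exponential integral: $\E_b[1/(a+b)]=\mu e^{\mu a}\Eii(\mu a)$ and $\E_b[b/(t+b)]=1-\mu t\,e^{\mu t}\Eii(\mu t)$. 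Substituting the second gives $f_{R(\pi)}(t)=2\mu e^{-\mu t}-2\mu^2 t\,\Eii(\mu t)$ directly; substituting the first and performing the remaining $a$-integral with the antiderivative $\int\Eii(x)\,\dd x=x\,\Eii(x)-e^{-x}$ gives the \emph{same} expression for $f_D$. (Equivalently, one may quote $f_D$ from \cite[Theorem 1]{mankar2019distance}, derive $f_{R(\pi)}$ by the void-probability argument of Appendix~B, and check agreement; a fast consistency check is that both means equal $1/(3\lambda)$ in 1D, via $\E[ab/(a+b)]=1/(3\mu)$.)

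\emph{Main obstacle.} The algebra is routine once set up; the two places to be careful are (i) fixing the $\mathrm{Exp}(2\lambda)$ parametrization and the uniform-point weights $a/(a+b),b/(a+b)$ exactly, since any stray factor would silently destroy the match, and (ii) the cancellation through the $\Eii$ antiderivative that collapses two a-priori different integrals onto one formula. The real content, however, is conceptual: this collapse is a one-dimensional coincidence, whereas in 2D the opposite-direction boundary $R(\pi)$ and the uniform-location distance $D$ are shaped by genuinely different geometry. I would therefore prove ``identical'' only for the 1D reduction and present the 2D statement as the numerical near-equality that Fig.~\ref{fig:D, R(pi)} and Table~\ref{tabel:mean} exhibit.
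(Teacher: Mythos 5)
Your proposal is correct and takes essentially the same route as the paper: the paper likewise treats the 2D claims as simulation observations (Table~I, Figs.~2 and~7) and proves only the 1D identity, and its Appendix~B uses exactly your decomposition---half-distances iid $\mathrm{Exp}(2\lambda)$ (there ordered as $R_1\le R_2$), side-selection weights $r_2/(r_1+r_2)$, and the same $\Eii$ closed form, $F_{R(\pi)}(r)=1-e^{-2\lambda r}+2\lambda re^{-2\lambda r}-4\lambda^2 r^2\,\Eii(2\lambda r)$, whose derivative matches your $f_{R(\pi)}(t)=2\mu e^{-\mu t}-2\mu^2 t\,\Eii(\mu t)$ with $\mu=2\lambda$. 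The only (minor, and slightly strengthening) difference is that the paper establishes the identity by citing \cite[Theorem 1]{mankar2019distance} for the law of $D$, whereas you re-derive $f_D$ internally and verify the coincidence analytically via the antiderivative $\int\Eii(x)\,\dd x=x\,\Eii(x)-e^{-x}$, together with the mean check $\E D=\E R(\pi)=1/(3\lambda)$, which agrees with the paper's stated value $1/3$ at $\lambda=1$.
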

In addition, the distance from the typical BS to the nearest edge location, $R_{\min}$,  is distributed as $ f_{R_{\min}}(r)=8\lambda\pi r e^{-4\lambda\pi r^2} $ as it is half the nearest-neighbor distance in the PPP. Since $\E (\pi R_{\min}^2)=1/4$, $3/4$ of the interior users are farther from the nucleus than the nearest edge user. And $\E (R_{\min})$ is only 37\% of the mean distance in the direction of the uniformly random user.

These facts may prompt us to rethink some assumptions that are generally made, such as the claim that edge users necessarily suffer from low signal strength. Also, care is needed when evaluating the performance of non-orthogonal multiple access (NOMA) schemes, especially if “cell-center” refers to a user located uniformly at random in the cell and “cell-edge” refers to a user located uniformly at random on the edge of the cell. In this case, simply pairing a cell-center user as the strong user and an edge user as the weak one may be quite inefficient, since the edge user may be closer to the BS than the ``cell-center” user. Conversely, if “cell-center” and “cell-edge” are defined based on relative distances between serving and interfering base stations \cite{mankar20NOMA,Feng19SIRgain}, then a “cell-edge” user may actually be quite far from the edge of the cell. A potential model to pair users for Poisson Voronoi cells is to select a “cell-center” user uniformly at random inside the cell, and select an edge user whose angle differs only slightly from that of the “cell-center” user. This increases the likelihood of significant channel gain difference between users and thus increases the NOMA gain. An alternative model that guarantees the intended ordering of strong and weak user is to place the two randomly in the in-disk of the cell and then order them \cite{Ali19tcom}.

\section{A Joint Spatial-Propagation Model for Cellular Networks}

{In coverage-oriented cellular networks, it is natural to assume that the operator uses a deployment method where BSs are spaced more densely in regions with severe signal attenuation and less densely in regions with more benign propagation conditions. In this section, we assume that the BS locations result from such a deployment procedure. Consequently, we introduce the JSP model which reverse-engineers the underlying cell-dependent shadowing characteristics from the shape and size of the Voronoi cells. For the Poisson deployment, the Voronoi cell radii distributions are provided in the last section. We refer to the JSP model for the PPP as the JSP-PPP model. }
\subsection{System Model}
Let $\Phi\subset \mathbb{R}^2$ be a stationary point process with intensity $\lambda$ modeling BS locations.  The typical user is located at the origin $o$ without loss of generality. We assume all BSs are active and transmit with unit power. For $x\in\Phi$, denote by $h_x$ and $K_x$ the power of the small-scale iid Rayleigh fading with unit mean and the large-scale shadowing between $x$ and the origin, respectively. The power-law path loss model is considered, $i.e.,$ $\ell(x)=\|x\|^{-\alpha}$, where $\alpha>2$ is a constant. {Note that this propagation model applies to a low-density high-power BS deployment, which is usually well-planned. The framework can be generalized to a dense small cell networks setting by accounting for the LoS/NLoS effect with a multi-slope LoS/NLoS path loss model, in which case the BS density plays a more critical role. For instance, see \cite{LoS}. }

 Let $\{x_i\}_{i\in\mathbb{N}_0}$ be the point process ordered by the distance to the origin: \(x_0\triangleq\argmin_{x\in\Phi}\{\|x\|\}\) and so on. Let $r(x)$ be the distance from $x\in\Phi$ to its Voronoi cell edge oriented towards the typical user. Note that $r(x_0)\equiv R_0(0)$, which is the zero-cell radius in the direction of the typical user in Section II. Fig. \ref{fig:JSP-illu} shows a realization of $\Phi$ and the corresponding $r(x_0), r(x_1), r(x_2)$. By the construction of the Voronoi cells, $r(x_0)\geq \|x_0\|$ and $r(x_i)\leq 
\|x_i\|,~i\geq1$.  
\begin{definition}[Cell-dependent shadowing]
In cell-dependent shadowing, for given $\Phi$, $\{K_x\}_{x\in\Phi}$ are conditionally independent log-normal random variables such that the expected large-scale path loss from $x$ to its Voronoi cell boundaries is $P_0$, $i.e.,$
\begin{equation}
    \Ex[K_{x} r(x)^{-\alpha} \mid \Phi] = P_0.
    \label{eq:JSP_def}
\end{equation}
We denote by $\mu_x$ and $\sigma_x$ the mean and standard deviation of $\log(K_x)$ conditioned on $\Phi$, and we fix $\sigma_x \equiv \sigma\geq0$ for $\forall x \in\Phi$.  \end{definition}
\begin{figure}[t]
    \centering
    \includegraphics[width = 0.5\textwidth]{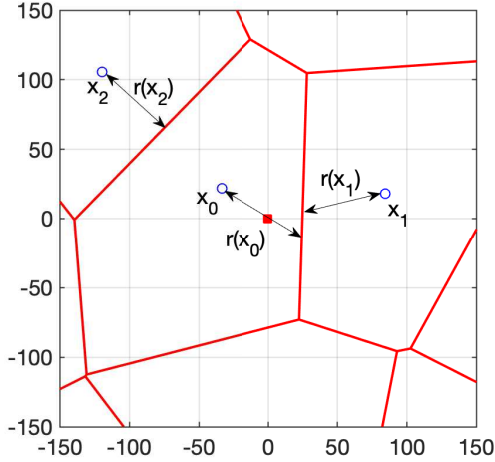}
    \caption{An illustration of the JSP model with cell-dependent shadowing. Blue circles are BS locations generated from a PPP with $\lambda = 3.5\times10^{-5}$. Red lines are the Voronoi tessellation. The typical user denoted by the red square is located at the origin. $r(x_i)$ is the length of the black line segment, which is the cell radius of $x_i$ oriented towards the typical user.}
    \label{fig:JSP-illu}
\end{figure}
\begin{remark}
Cell-dependent shadowing introduces dependence between shadowing and cell radius (determined by BS geometry). For the Poisson deployment, the shadowing coefficients are correlated because nearby cells in the PVT are correlated in shape and size and, in particular, in their directional radii. Intuitively, cells in proximity are shaped by some common points. In cell-dependent shadowing for a point pattern $\phi$ ($e.g.$, a realization of a PPP), the shadowing coefficients are independent (usually not identical) log-normal random variables.
\end{remark}

\begin{remark}
The model in \cite{guo2015jsp} assumes $r(x)^{-\alpha(x)} = P_0$, which captures the cell-dependent signal propagation through the PLE. It is sensitive to $P_0$ and $\lambda$ due to the singularity of the path loss model. Our model avoids its deficiency while generalizing several models in the literature: if in  (\ref{eq:JSP_def}), $P_0 = r(x)^{-\alpha}$ (instead of a constant), we retrieve the iid shadowing model in \cite{Blasz15Poisson}; if further $\sigma =0$, we retrieve the traditional model without shadowing (or constant shadowing) in \cite{andrews2011tractable}.  The log-normal model is commonly used for shadowing and allows us to compare this work with previous models. 
\end{remark}

\begin{remark}
{Note that the shadowing from an interfering BS $x$ to the typical user is assumed to only be related to the cell radius $r(x)$, and $\sigma_x\equiv \sigma$ is fixed for all BSs. This is a simplification\footnote{Such a simplification is common in the literature, $e.g.,$ often the shadowing coefficients from all BSs are modeled as identically distributed.} as the shadowing may occur along the signal path outside the cell, and more remote BSs may have a larger shadowing variation. Nevertheless, the assumption enables an average minimum received power at all cell edges, which is the primary concern for coverage. Further, it is expected that the power-law path loss is the dominating large-scale effect for remote BSs.}
\end{remark}

For the cell-dependent shadowing, $\sigma$ captures the variation of $K_x\|r(x)\|^{-\alpha}$ around $P_0$.  
\subsubsection{$\sigma = 0$}
For $\sigma = 0$, $\{{K_x}\}_{x\in\Phi}$ is a deterministic function of  $\Phi$. In this case, we have
\begin{equation}
    K_x = P_0r(x)^{\alpha}.
\end{equation}
Users located at the Voronoi cell edge of $x$ receive a constant signal power $P_0$ (averaged over small-scale fading) from $x$. {This corresponds to a scenario where operators have access to precise terrain and propagation data and the BS layout is optimized for coverage.}

\subsubsection{$\sigma > 0$}
For $\sigma > 0$, the shadowing in the JSP model is doubly random such that the power averaged over small-scale fading at the cell edge  fluctuates around $P_0$. In this case, we have
\begin{equation}
   \Ex[ K_x\mid \Phi] = P_0r(x)^{\alpha}.
\end{equation}
{This corresponds to a scenario where operators have imprecise terrain and propagation data or where the BS deployment is suboptimal for coverage.} Given $\Phi$, we have $ \exp(\mu_x+\sigma^2/2) = P_0r(x)^{\alpha},$
which yields $\mu_x = \log(P_0r(x)^{\alpha})-\sigma^2/2.$ Depending on whether $\sigma=0$ or $\sigma>0$, $\{{K_x}\}_{x\in\Phi}$ is either a deterministic function of $\Phi$ or is a set of random variables correlated with $\Phi$. From the expression of $\mu_x$, the correlation diminishes as $\sigma$ increases.

We consider the strongest-BS association throughout this paper, $i.e.,$ the typical user is served by the BS with the strongest signal averaged over small-scale fading. Denote the serving BS by $x=\argmax_{y\in\Phi}\{ {K_y\|y\|^{-\alpha}}\}$. The signal-to-interference ratio (SIR) is
\begin{equation}
{\rm{SIR}} \triangleq \frac{S}{I}  = \frac{h_xK_x\|x\|^{-\alpha}}{\sum_{y\in{\Phi\setminus\{x\}}}h_y K_y\|y\|^{-\alpha}}.
\end{equation}
\subsection{Performance Metrics}
We focus on the following three performance metrics.

\subsubsection{Asymptotic Gain}
The success probability is defined as $p_{\rm{s}}(\theta) \triangleq \Pr(\rm{SIR} > \theta),~\theta>0$. For models with iid Rayleigh fading, it is shown in \cite{haenggi2014mean} that
\begin{equation}
    \label{eq:asymp-equi-MISR}
   1-p_{\rm{s}}(\theta) \sim {\rm{MISR}} \ \theta,\quad \theta\to0,
\end{equation}
where $A(t)\sim B(t)$ means the limit of their ratio goes to 1, and the $\rm{MISR}$ (mean interference-to-signal ratio) is defined as\footnote{Shadowing is not considered in the model and definition of the MISR in \cite{haenggi2014mean}. But it is straightforward to extend the definition of the MISR to include shadowing.}
\begin{align}
      \nonumber{ \rm{MISR}}& \triangleq \Ex[I/\Ex_{h}[S]]\\ \nonumber&=\Ex\bigg[\sum_{y\in\Phi\setminus\{x\}}\frac{K_y\|y\|^{-\alpha}}{K_x\|x\|^{-\alpha}}\bigg].
\end{align}

Thus, we can compare the asymptotics of the success probabilities for different models by simply calculating the ratio of their MISRs. Throughout this paper, we use the standard PPP model as the baseline for comparison, where $\rm{MISR}_{PPP} = 2/(\alpha-2)$ \cite{haenggi2014mean}. Let $G$ denote the asymptotic gain. We have
\begin{equation}
    G = \frac{\rm{MISR}_{PPP}}{\rm{MISR}}.
\end{equation}
\subsubsection{SIR Meta Distribution}
For ergodic point processes, the SIR meta distribution \cite{haenggi2016meta} gives the fraction of users that achieve an SIR $\theta$ with a reliability higher than $x$, which is a more fine-grained performance metric than $p_{\rm{s}}(\theta)$. It is defined as $
    \bar{F}_{P_{\rm{s}}}(\theta,x)\triangleq \Pr(P_{\rm{s}}(\theta)>x),~ x\in[0,1]$, where $
    P_{\rm{s}}(\theta) \triangleq \Pr({\rm{SIR}}>\theta\mid \Phi,\{K_{y}\}_{y\in\Phi})$ is the conditional success probability. In words, $P_{\rm{s}}(\theta)$ is the reliability of the typical link under small-scale fading while the large-scale propagation (shadowing and path loss) is given. 
    For Rayleigh fading, the conditional success probability is
\begin{equation}
\begin{split}
    \nonumber P_{\rm{s}}(\theta) & \triangleq \Pr({\rm{SIR}}>\theta\mid \Phi, \{K_{y}\}_{y\in\Phi})\\
    \nonumber & = \Ex\bigg[\exp{\bigg(-\theta\sum_{y\in\Phi\setminus\{x\}}h_y\frac{K_{y}\|y\|^{-\alpha}}{K_{x}\|x\|^{-\alpha}}\bigg)}~\Bigr|~ \Phi,\{K_{y}\}_{y\in\Phi}\bigg]\\
     & \peq{a}  \prod_{y\in\Phi\setminus\{x\}}\frac{1}{1+\theta (\|x\|/\|y\|)^\alpha K_{x}/K_{y}}.\\
\end{split}    
\end{equation}
Step (a) follows from the iid exponential distribution of $h_x, x\in\Phi$. 
The $b$-th moment of the conditional success probability is
\begin{equation}
  M_b(\theta) = \Ex[P_{\rm{s}}(\theta)^b],~b\in\mathbb{C}.
\end{equation}
Note that $p_{\rm{s}}(\theta)\equiv M_1(\theta)$. 

\subsubsection{Path Loss Point Process}
We define the path loss point process\footnote{It is also referred to as the ``propagation process'' in \cite{Blasz15Poisson} or the ``signal spectrum'' in \cite{Ross17stillPoisson}.} for a general BS point process $\Phi$ to be \(\Pi\triangleq\) \(\left\{ \|x\|^\alpha/K_x \right\}_{x \in \Phi}\). The path loss point process, introduced in \cite{haenggi2008plpp}, characterizes the received signal strengths (averaged over small-scale fading) from all transmitters in the network from the viewpoint of the typical user. This notion helps establish equivalence between the performance of networks when their path loss point processes have the same distribution. To avoid a colocated BS and user, we assume no BS is located at the origin. 

\subsection{Relevant Results}
In the standard models, the shadowing is a constant, $i.e.,$ $K_x \equiv 1,~x\in\Phi$. The large-scale path loss depends only on the BS locations. The nearest BS  provides the strongest signal. It is known that for the standard PPP, $
     M_b(\theta) = {1}/{_2F_1(b,-\delta;1-\delta;-\theta)}$ \cite{haenggi2016meta}, $b\in\mathbb{C}$, where $_2F_1$ is the Gauss hypergeometric function, and $\delta\triangleq2/\alpha$.  The asymptotic gain $G$ captures the SIR gap due to BS deployment. For instance, the standard triangular lattice has an approximately 3.4 dB asymptotic SIR gain over the standard PPP for $\alpha=4$ \cite{haenggi2014mean}.

     In the iid log-normal shadowing model \cite{Blasz15Poisson}, $\{K_x\}_{x\in\Phi}$ are iid, and $\log K_x \sim \mathcal{N}(\mu,\sigma^2)$, $\mu=-\sigma^2/2$, so that $\E K_x\equiv1$. It is shown in \cite{Blaszczyszyn2010} that the path loss point process $\Pi$ for a PPP with iid shadowing is an inhomogeneous PPP. Thus, under the strongest-BS association, the iid log-normal shadowing model for the PPP performs exactly the same as the (baseline) PPP.
     Further, \cite{Blasz15Poisson} shows that when $\sigma\to\infty$ in the iid log-normal shadowing model, the path loss point process of any deterministic/stochastic BS point processes converges to that of a PPP, given the point process satisfies a mild homogeneity constraint. Remarkably, \cite{Ross17stillPoisson} proves that this conclusion also holds for moderately correlated shadowing.

\section{Performance Analysis of the Joint Spatial-Propagation Model}
In this section, we analyze the performance of the JSP-PPP model. We focus on the distribution of the serving signal, shadowing distribution/correlation, the asymptotic SIR gain, the SIR meta distribution, and finally the path loss point process. We first introduce the lemma below.
\begin{lemma}
\label{lemma: r(x_i)/|x_i|}
For a Poisson point process with intensity $\lambda$, the ccdf of $r(x_i)/\|x_i\|$, $i\geq1$ is
\begin{equation}
    \P(r(x_i)/\|x_i\| > t) = (1-t^2)^i,\quad t\in[0,1],
    \label{eq: r(x_i)/|x_i|}
    \end{equation}
    and the ccdf of $r(x_i)$ is
    \begin{equation}
\Pr(r(x_i)>t) = \exp(-\lambda\pi t^2).
\label{eq: r(x_i), i>=1}
    \end{equation}
\end{lemma}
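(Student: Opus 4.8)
The plan is to condition on the distance $\|x_i\|$ and to exploit two standard facts about the PPP: given $\|x_i\|=\rho$, the $i$ points of $\Phi$ closer to the origin than $x_i$ are independent and uniformly distributed in the disk $b(o,\rho)$, while the points farther out form an independent PPP of intensity $\lambda$ on the complement of $b(o,\rho)$.

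First I would give a geometric characterization of $\{r(x_i)>s\}$. Let $\hat u=-x_i/\|x_i\|$ be the unit vector pointing from $x_i$ toward the origin, and let $p=x_i+s\hat u$ be the point on that ray at distance $s$ from $x_i$. Since $p$ lies in the Voronoi cell of $x_i$ exactly when $x_i$ is its nearest point of $\Phi$, the event $\{r(x_i)>s\}$ is equivalent to the open ball $b(p,s)$ containing no point of $\Phi$ (with $x_i$ itself sitting on $\partial b(p,s)$). The crucial geometric observation is that $b(p,s)$ is internally tangent to $\partial b(o,\|x_i\|)$ at $x_i$: because $\|p\|=\|x_i\|-s$, every point of $b(p,s)$ is within distance $\|p\|+s=\|x_i\|$ of the origin, so $b(p,s)\subset b(o,\|x_i\|)$. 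Consequently the emptiness of $b(p,s)$ is determined entirely by the interior points, and the exterior PPP is irrelevant.

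For the ratio statement I would set $s=t\|x_i\|$ and condition on $\|x_i\|=\rho$. The ball $b(p,t\rho)$ has area $\pi t^2\rho^2$ and lies inside $b(o,\rho)$ of area $\pi\rho^2$, so each of the $i$ interior points (uniform in $b(o,\rho)$) avoids it independently with probability $1-t^2$. Hence $\P(r(x_i)/\|x_i\|>t\mid\|x_i\|=\rho)=(1-t^2)^i$, which is free of $\rho$ and, by isotropy, of the direction of $x_i$; removing the conditioning yields (\ref{eq: r(x_i)/|x_i|}). For the unnormalized ccdf I would then integrate against the law of $\|x_i\|$, the $(i+1)$-th nearest-neighbor distance, with density $f_{\|x_i\|}(\rho)=\tfrac{2(\lambda\pi)^{i+1}}{i!}\rho^{2i+1}e^{-\lambda\pi\rho^2}$. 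Using $r(x_i)\le\|x_i\|$ (so the integrand vanishes for $\rho<t$) together with the conditional ccdf $(1-t^2/\rho^2)^i$ for $\rho\ge t$ gives
\[
\P(r(x_i)>t)=\frac{2(\lambda\pi)^{i+1}}{i!}\int_t^\infty(\rho^2-t^2)^i\,\rho\,e^{-\lambda\pi\rho^2}\,\dd\rho .
\]
The substitution $u=\rho^2-t^2$ reduces this to $e^{-\lambda\pi t^2}\tfrac{(\lambda\pi)^{i+1}}{i!}\int_0^\infty u^i e^{-\lambda\pi u}\,\dd u$, and the Gamma integral equals $i!/(\lambda\pi)^{i+1}$, leaving exactly $e^{-\lambda\pi t^2}$, which is (\ref{eq: r(x_i), i>=1}).

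The step needing the most care is the reduction to the interior points: one must justify that, conditioned on $\|x_i\|=\rho$, the $i$ closer points are iid uniform in $b(o,\rho)$, and that the tangency $b(p,s)\subset b(o,\rho)$ makes the configuration outside the disk immaterial. Once these are in place, both parts collapse to a routine area ratio and a routine Gamma integral, respectively.
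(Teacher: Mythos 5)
Your proof is correct and takes essentially the same route as the paper's: condition on $\|x_i\|$, use the fact that the $i$ closer points are iid uniform in $b(o,\|x_i\|)$, reduce the event $\{r(x_i)>t\|x_i\|\}$ to the voidness of a disk occupying a fraction $t^2$ of the conditioning disk's area to get $(1-t^2)^i$, and then decondition against the law of $\|x_i\|$. If anything, you make explicit two details the paper glosses over: the internal-tangency observation $b(p,s)\subset b(o,\|x_i\|)$, which is exactly what justifies the paper's step (a) (its claim that only the disk's radius matters holds because the tangent ball never leaves the conditioning disk, so the exterior PPP is irrelevant), and the substitution-plus-Gamma-integral computation yielding $e^{-\lambda\pi t^2}$, which the paper states without calculation as ``combining with the distribution of $\|x_i\|$.''
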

\begin{proof}
Recall that $x_i$ is the $i+1$-th closest point to the origin. Let $\Phi (b(o,r))$ denote the number of points in $\Phi$ falling in the disk of radius $r$ centered at the origin.
For $t\in[0,1]$,
\begin{align}
  \nonumber  &\P(r(x_i)/\|x_i\| > t) \\
  \nonumber & = \E \P(r(x_i) > \|x_i\|t \mid \|x_i\|)\\
   \nonumber & \peq{a} \E \P( \Phi (b(o,\|x_i\|t)) = 0 \mid \Phi (b(o,\|x_i\|)) = i)\\
  \nonumber & \peq{b} \E \left(\frac{\|x_i\|^2-{\|x_i\|}^2{t}^2}{\|x_i\|^2}\right)^i\\
  \nonumber & = (1-t^2)^i.
\end{align}
Step (a) holds since the probability of having no point inside a disk only depends on the radius of the disk, not on the disk center. Step (b) follows from the property of the PPP, where conditioned on $\|x_i\|$, the $i$ points are distributed uniformly at random in $b(o,\|x_i\|)$. Combining (\ref{eq: r(x_i)/|x_i|}) with the distribution of $\|x_i\|$ \cite{haenggi05distances} we obtain the ccdf for $r(x_i)$, $i\geq1$, in (\ref{eq: r(x_i), i>=1}).
\end{proof}

\subsection{The Serving Signal}
For $\sigma = 0$, the nearest BS $x_0$ provides the strongest signal. Hence $ \Ex_h[S] = K_{x_0}\|x_0\|^{-\alpha}$.
We have
\begin{align}
\nonumber\Pr(K_{x_0} \|x_0\|^{-\alpha} > t)
& = \Pr(P_0 r(x_0)^\alpha\|x_0\|^{-\alpha} > t)\\\nonumber
& = \Pr(\|x_0\|/r(x_0)<({P_0}/{t})^{1/\alpha})\\
& \peq{a} P_0^\delta{t}^{-\delta},\quad t\geq P_0,
\label{eq:p_T}
\end{align}
where $t\geq P_0$ due to the minimum received power constraint. Step (a) follows from Lemma \ref{lemma: |x_0|/r(x)}. The distribution of $K_{x_0} \|x_0\|^{-\alpha}$ does not depend on the intensity or distribution of $\Phi$, and it is equal to the distribution of the signal power in a disk where the received power at the cell edge is $P_0$. In other words, for the serving signal, the JSP model turns any irregular cell shape into a disk. For the standard model, (\ref{eq:p_T}) can be shown to hold asymptotically \cite[Lemma 7]{ganti2016asymptotics}. 

Fig. \ref{fig:strongest-signal dist.} shows the distribution of $\Ex_h[S]$ for the JSP-PPP model and (\ref{eq:p_T}) with $\sigma = 0,~ \lambda = 10^{-2},~P_0 = 0.5$, and $\alpha = 4$. Fig.  \ref{fig:strongest-signal dist in Tri.} shows that we can use $(P_0/\theta)^{-\delta}$ to approximate the distribution of the signal from the nearest BS in the standard triangular lattice network, which is not surprising considering that hexagonal cells and circular cells are similar in shape.  The intensity of the triangular lattice in Fig.  \ref{fig:strongest-signal dist in Tri.} is scaled such that $P_0 = (\lambda\pi)^{1/\delta}$ for a fair comparison. Note that unlike in Poisson networks, there is a minimum average received power in lattices determined by the intensity of the point process.

We further obtain the tail of the ccdf of $S$ as follows. 
\begin{figure}[t]
    \centering
     \begin{subfigure}[t]{0.75\columnwidth}
    \includegraphics[width = \linewidth]{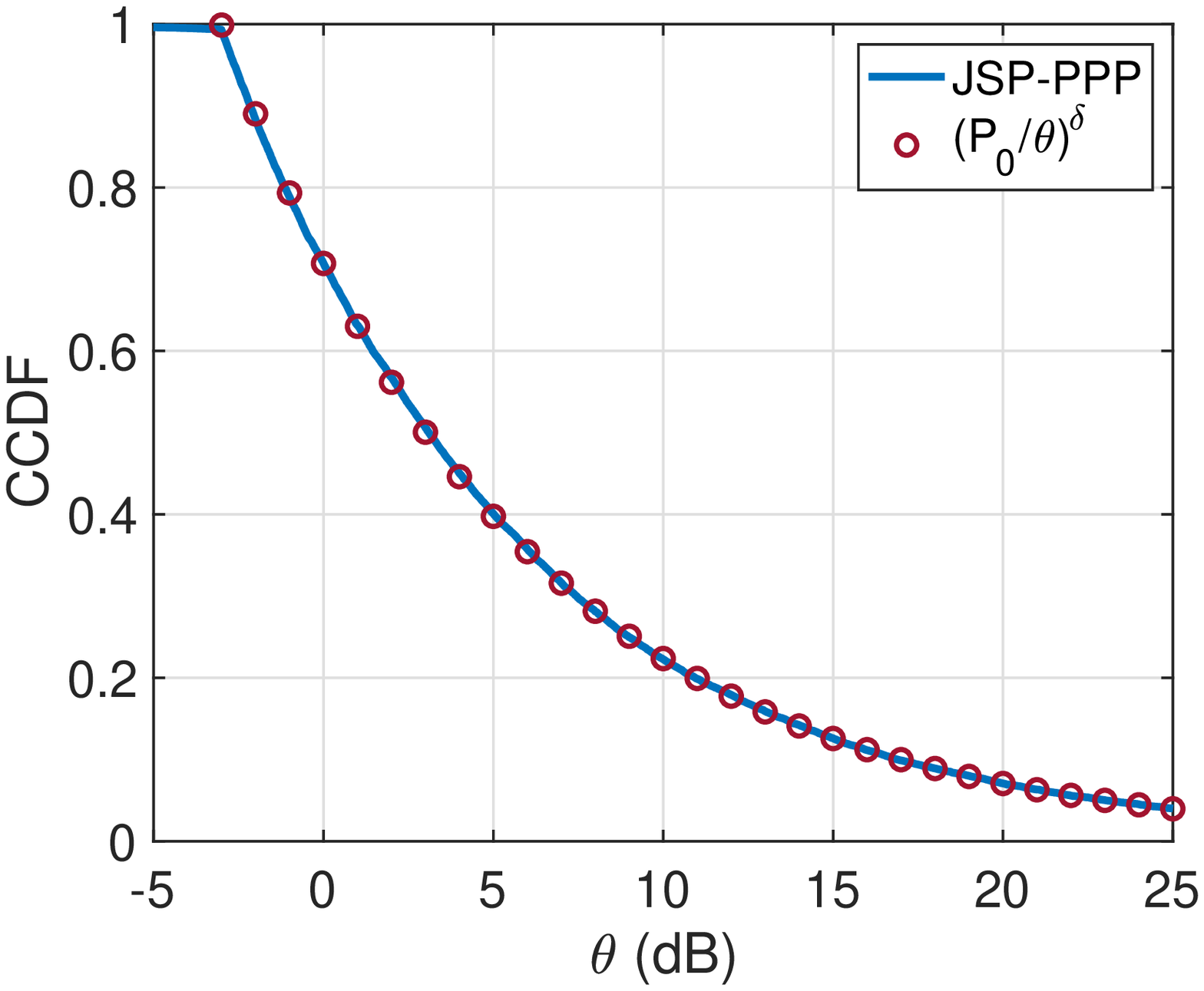}
     \caption{JSP-PPP, $\sigma = 0$.}
     \label{fig:strongest-signal dist.}
    \end{subfigure}
    \vspace{1em}

     \begin{subfigure}[t]{0.75\columnwidth}
    \includegraphics[width =  \linewidth]{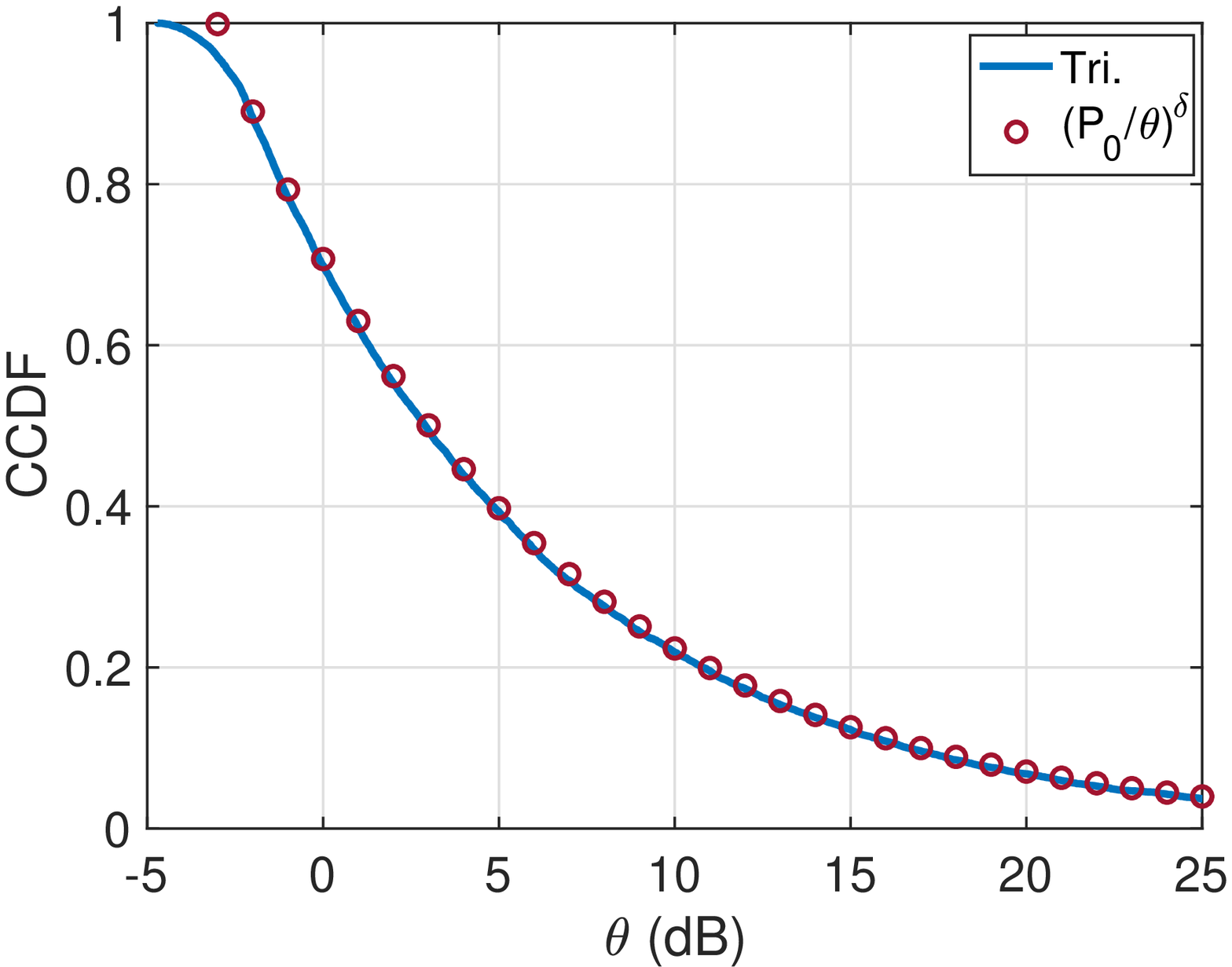}
     \caption{Triangular lattice, $\lambda = P_0^\delta/\pi=0.225$.}
     \label{fig:strongest-signal dist in Tri.}
    \end{subfigure}
    \caption{The distribution of $\Ex_{h}[S]$. $P_0 = 0.5$, $\alpha=4.$}
\end{figure}

\begin{lemma}
For the JSP model with any BS process and $\sigma=0$,
\begin{equation}
    \Pr(S > t)\sim P_0^\delta\Ex(h^\delta){t}^{-\delta}, \quad t\to\infty.
    \label{eq:p_S}
\end{equation}
\end{lemma}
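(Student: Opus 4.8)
The plan is to exploit that, for $\sigma=0$, the serving signal factorizes as $S = h_{x_0}\,W$ with $W\triangleq K_{x_0}\|x_0\|^{-\alpha}$, where the fading $h_{x_0}$ (unit-mean exponential) is independent of $W$. The crucial input is already available: by (\ref{eq:p_T}) the factor $W$ has an \emph{exact} power-law tail $\Pr(W>t)=P_0^\delta t^{-\delta}$ for all $t\geq P_0$, and moreover $W\geq P_0$ almost surely since $r(x_0)\geq\|x_0\|$. Thus $S$ is the product of a light-tailed factor $h_{x_0}$, all of whose moments are finite (in particular $\Ex(h^\delta)=\Gamma(1+\delta)<\infty$), and a heavy-tailed factor $W$ of tail index $\delta$. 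The statement is precisely the Breiman-type conclusion that the tail of such a product equals $\Ex(h^\delta)$ times the tail of the heavy factor.

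Concretely, I would condition on $h_{x_0}=h$ and integrate out $W$, writing $\Pr(S>t)=\Ex_h\big[\Pr(W>t/h\mid h)\big]$. Splitting on whether $t/h\geq P_0$: on $\{h>t/P_0\}$ we have $t/h<P_0$, so $\Pr(W>t/h)=1$ (as $W\geq P_0$), contributing $\Pr(h>t/P_0)=e^{-t/P_0}$; on $\{h\leq t/P_0\}$ the exact tail of $W$ gives $\Pr(W>t/h)=P_0^\delta h^\delta t^{-\delta}$. Hence
\[
\Pr(S>t)=P_0^\delta t^{-\delta}\,\Ex_h\!\big[h^\delta\,\mathbbm{1}(h\leq t/P_0)\big]+e^{-t/P_0}.
\]
Multiplying by $t^\delta$ and letting $t\to\infty$, the exponential remainder is negligible against any power of $t$, and by monotone convergence $\Ex_h\big[h^\delta\,\mathbbm{1}(h\leq t/P_0)\big]\uparrow\Ex(h^\delta)$, which yields $\Pr(S>t)\sim P_0^\delta\,\Ex(h^\delta)\,t^{-\delta}$. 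Since this relies only on (\ref{eq:p_T}), which holds for any BS process whose zero-cell is a.s.\ finite (via Lemma \ref{lemma: |x_0|/r(x)}), no further assumption on $\Phi$ is required.

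The main obstacle is bookkeeping rather than conceptual: one must confirm that the large-fading contribution $e^{-t/P_0}$ is asymptotically negligible and that the truncated $\delta$-moment converges to $\Ex(h^\delta)$. Both are immediate precisely because $W$ has an exact power law and $h$ has a finite $\delta$-moment---the finiteness of $\Ex(h^\delta)$ is exactly what prevents the product from developing a tail heavier than $W$. If a citation-based route is preferred, invoking Breiman's theorem (regularly varying $W$ of index $\delta$, independent $h$ with $\Ex(h^{\delta+\varepsilon})<\infty$) gives the result in one line; but the explicit conditioning above is self-contained and takes full advantage of the exactness of (\ref{eq:p_T}).
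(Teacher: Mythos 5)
Your proposal is correct and follows essentially the same route as the paper: the paper's proof also rewrites $\Pr(S>t)=\Pr\bigl(\|x_0\|/r(x_0)<(P_0h_{x_0}/t)^{1/\alpha}\bigr)$ and averages the exact power-law tail of $K_{x_0}\|x_0\|^{-\alpha}$ over the independent fading, stating the asymptotic directly. Your version merely makes explicit the steps the paper leaves implicit---splitting on $\{h>t/P_0\}$, bounding the resulting $e^{-t/P_0}$ remainder, and invoking monotone convergence for the truncated moment $\Ex\bigl[h^\delta\,\mathbbm{1}(h\leq t/P_0)\bigr]\uparrow\Ex(h^\delta)$---which is sound and fully consistent with the paper's argument.
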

\begin{proof}
\begin{align}
\Pr(S > t)\nonumber
& = \Pr(P_0h_{x_0} r(x_0)^\alpha\|x_0\|^{-\alpha} > t)\\\nonumber
& = \Pr(\|x_0\|/r(x_0)<({P_0h_{x_0}}/{t})^{1/\alpha})\\\nonumber
&\sim P_0^\delta\Ex(h^\delta){t}^{-\delta},\quad t\to\infty.
\end{align}
\end{proof}
In \cite[Lemma 7]{ganti2016asymptotics}, it is shown that for the standard model, the tail of the ccdf of the desired signal strength for all stationary point processes is $\mathbb{P}(S>t) \sim \lambda \pi \mathbb{E}(h^{\delta}) t^{-\delta},~t\to\infty.$ If we let 
\begin{equation}
\nonumber P_0 = (\lambda\pi)^{1/\delta},
\end{equation}
we obtain the same tails. Intuitively, if we could ``pack'' the space with congruent disks, we would have $r^{-\alpha} = (1/\lambda\pi)^{-\alpha/2} = (\lambda\pi)^{1/\delta} = P_0$.

For $\sigma>0$, the serving BS $x=\argmax_{y\in\Phi}\{ {K_y\|y\|^{-\alpha}}\}$.
\begin{align}
   \nonumber &\Pr(K_{x} \|x\|^{-\alpha} <t )\\
   & = \Pr(K_{y} \|y\|^{-\alpha}<t, y\in\Phi )\\
   \nonumber & =  \Ex \prod_{y\in\Phi} \Pr(K_y<\|y\|^{\alpha}t\mid \Phi)\\
   \nonumber & = \Ex \prod_{y\in\Phi} \bigg(\frac{1}{2}+\frac{1}{2}\operatorname{erf}\left[\frac{\log t\|y\|^\alpha-\log P_0r(y)^\alpha +\sigma^2/2}{\sqrt{2} \sigma}\right]\bigg).
\end{align}

\subsection{Shadowing Coefficients}

    \subsubsection{Distribution}
For $\sigma =0$, the shadowing coefficient from any BS is a deterministic function of the cell radius of that BS oriented towards the origin. For the serving cell,
\begin{align}
\Pr(K_{x_0}\geq t) 
    & = \exp{(-\lambda\pi t^\delta P_0^{-\delta})}(1+\lambda \pi t^\delta P_0^{-\delta}),
    \label{eq:K_x_0}
\end{align}
and
\begin{align}
\Pr(K_{x_i}\geq t)
    &= \exp{(-\lambda\pi t^\delta P_0^{-\delta})},
    \label{eq:K_x_1}
\end{align}
which follow from the distribution of $r(x_0)$ and $r(x_i),i\geq1$, in Theorem \ref{theorem: 0-cell,unif loc} and Lemma \ref{lemma: r(x_i)/|x_i|}, respectively.

Based on (\ref{eq:K_x_0}) and (\ref{eq:K_x_1}), $\Ex [K_{x_0}]  = P_0(\lambda\pi)^{-\alpha/2}\Gamma(\alpha/2+2).$ $\Ex [K_{x_i}] = P_0(\lambda\pi)^{-\alpha/2}\Gamma(\alpha/2+1),~i\geq1$. Denoting by $\Var K_{x_i}$ the variance of $K_{x_i}$, we have $\Var K_{x_0} = P_0^2(\lambda\pi)^{-\alpha}(\Gamma(\alpha+2) - \Gamma(\alpha/2+2)^2)$
 and $\Var K_{x_i} = P_0^2(\lambda\pi)^{-\alpha}(\Gamma(\alpha+1) - \Gamma(\alpha/2+1)^2),~i\geq 1$.
Fig. \ref{fig:shadowing-dist} shows the ccdfs for $K_{x_0}$ and $K_{x_1}$. Fig. \ref{fig:shadowing-moment} shows the mean and standard deviation of $K_{x_0}$ and $K_{x_1}$ versus $\alpha$ based on (\ref{eq:K_x_0}) and (\ref{eq:K_x_1}). $K_{x_0}$ statistically dominates $K_{x_i},i\geq1$, since $r(x_0)$ statistically dominates $r_{x_i},i\geq1$.

\begin{figure}[t]
    \centering
     \begin{subfigure}{0.8\columnwidth}
    \includegraphics[width = \linewidth]{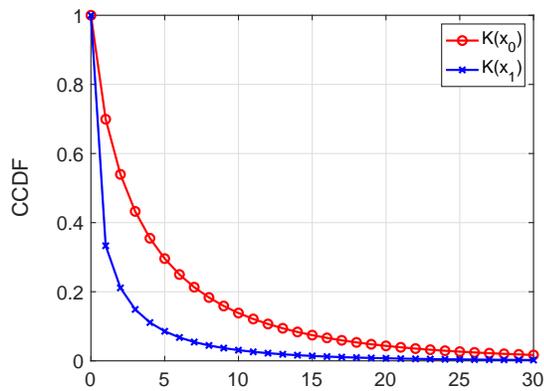}
    \caption{Ccdfs of the shadowing per (\ref{eq:K_x_0})  and (\ref{eq:K_x_1}). }
    \label{fig:shadowing-dist}
    \end{subfigure}
\vspace{1em}

\begin{subfigure}{0.8\columnwidth}
    \includegraphics[width = \linewidth]{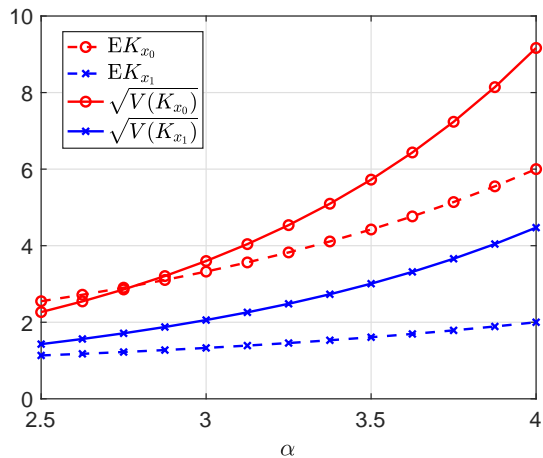}
    \caption{Mean and standard deviation. $P_0 = (\lambda \pi)^{\alpha/2}.$}
       \label{fig:shadowing-moment}
       \end{subfigure}
 \caption{Ccdfs, means, and standard deviations of $K_{x_0}$ and $K_{x_1}$, $\sigma=0$, $\lambda=1$, $\alpha=4$.}
\end{figure}

For $\sigma>0$, the ccdf of $K_{x_i}$ is
\begin{equation}
\begin{split}
    \Pr(K_{x_i}\geq t )  &= \Ex \Pr(K_{x_i}\geq t \mid r(x_i))\\\nonumber
    & =   \frac{1}{2}\Ex \erfc{\left(\frac{\log{t}-\log(P_0r(x_i)^\alpha)+\sigma^2/2}{\sqrt{2}\sigma}\right)},
\end{split}
\end{equation}
where the distribution of $r(x_0)$ is given in Theorem \ref{theorem: 0-cell,unif loc} and the distribution of $r(x_i),i\geq1$, is given in Lemma \ref{lemma: r(x_i)/|x_i|}. $\sigma$ appears in both the denominator and numerator inside of the erfc function. When $\sigma\to\infty$, the impact of $r(x_i)$ diminishes.

\subsubsection{Correlation}
We consider two types of shadowing correlation. The first type is the correlation between the shadowing coefficients from two BSs to the typical user. The second type is the correlation between the shadowing coefficient and the directional radius of a cell. In the proposed JSP model, these two types of correlation are inherently related, $i.e.,$ the correlation between shadowing is induced by the correlation between cell radius. If the BS deployment is modeled by a point pattern ($i.e.,$ deterministic point process), only the second type of correlation exists.

Let $P_0=1$ for simplicity. The correlation coefficient between the shadowing coefficients $K_x,K_y$ (from BS $x,y\in\Phi$)
 is
\begin{align}
   \nonumber \rho_{K_x,K_y}& = \frac{\Ex[K_xK_y]-\Ex K_x\Ex K_y}{\sqrt{\Var{K_x}}\sqrt{\Var{K_y}}},
\end{align}
where $\Ex[K_xK_y] = \Ex[r(x)^\alpha r(y)^\alpha],~\Ex K_x=\Ex[r(x)^\alpha]$, and $\Var{K_x}=\exp(\sigma^2)\Ex r(x)^{2\alpha}-(\Ex r(x)^\alpha)^2$. {As the distance between two BSs $x, y$ increases, the correlation between $r(x)$ and $r(y)$ vanishes. Hence the locality of the shadowing correlation is preserved. }
Obviously, $\rho_{K_x,K_y} \leq {\rho_{r(x)^\alpha,r(y)^\alpha}}$, and the equality holds when $\sigma=0$. Further, $\rho_{K_x,K_y}$ decreases with $\sigma$.  
For $\sigma\to\infty$, $\rho_{K_x,K_y} \to 0$ for any $x\neq y$. 

The correlation between $K_x$ and $r(x)^\alpha$ 
 is
\begin{align}
  \nonumber  \rho_{K_x,r(x)^\alpha}& = \sqrt{\frac{\Var (r(x)^\alpha)}{\Var{K_x}}},
\end{align}
where again, $\Var{K_x}=\exp(\sigma^2)\Ex r(x)^{2\alpha}-(\Ex r(x)^\alpha)^2$. $\rho_{K_x,r(x)^\alpha}=1$ for $\sigma=0$. For $\sigma\to\infty$,  $\rho_{K_x,r(x)^\alpha}\to 0$.

\subsection{Asymptotic Gain}
\begin{figure}[t]
    \centering
     \begin{subfigure}{0.8\columnwidth}
    \includegraphics[width = \linewidth]{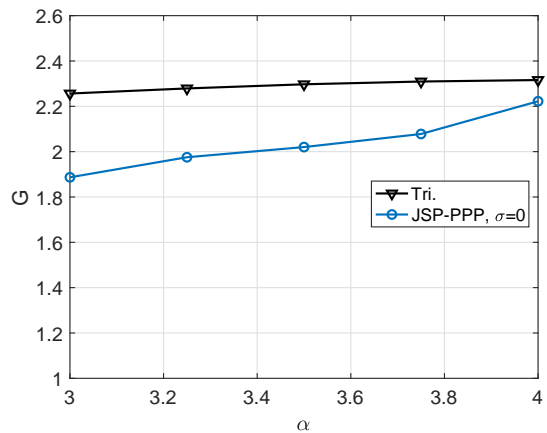}
    \caption{The asymptotic gain for triangular lattices and the JSP-PPP with $\sigma =0$.}
    \label{fig:deployment-gain}
    \end{subfigure}
    \vspace{1em}
    
     \begin{subfigure}{0.8\columnwidth}
    \includegraphics[width = \linewidth]{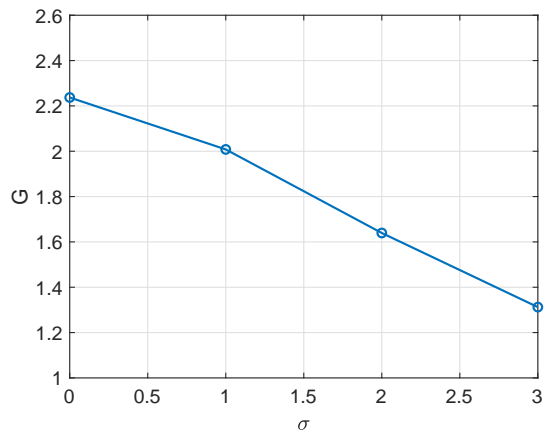}
    \caption{The asymptotic gain for the JSP-PPP with $\sigma = 0,1,2,3$, for $\alpha=4$. }
    \label{fig: JSP_G_sigma}
    \end{subfigure}
    \caption{The asymptotic gain of the JSP-PPP model relative to the standard PPP model. {Note that by definition, the standard PPP model yields $G\equiv 1$.}  }
\end{figure}

The MISR of the JSP-PPP model is \(
   { \rm{MISR}} = \Ex[{\sum_{y\in\Phi\setminus\{x\}}K_y\|y\|^{-\alpha}}/{K_{x}\|x\|^{-\alpha}}]\nonumber
\), which is independent of $P_0$ and $\lambda$. For $\sigma = 0,$ we have $\mathrm{MISR} = \Ex[\sum_{y\in\Phi\setminus\{x\}}{r(y)^\alpha}{\|y\|^{-\alpha}}/{r(x)^\alpha}{\|x\|^{-\alpha}}]$. The correlation between $r(x),r(y), \|x\|,\|y\|$ makes the calculation of the MISR involved. Hence we use simulations to study the impact of $\alpha$ and $\sigma$. 
Fig. \ref{fig:deployment-gain} shows the asymptotic gain (relative to the standard PPP model) for the standard triangular lattice model and the JSP-PPP with $\sigma=0$, which increases with $\alpha$. 
Fig. \ref{fig: JSP_G_sigma} shows the asymptotic gain $G$ for the JSP-PPP decreases with $\sigma$. 
As discussed in the last subsection, increasing $\sigma$ decreases the correlation between shadowing and cell radius as well as the correlation between the shadowing coefficients.  Eventually, as $\sigma\to\infty$ the JSP-PPP model reverts to the PPP with iid log-normal shadowing. 

\subsection{SIR Meta Distribution}
\begin{figure}
    \centering
    \includegraphics[width = 0.95\columnwidth]{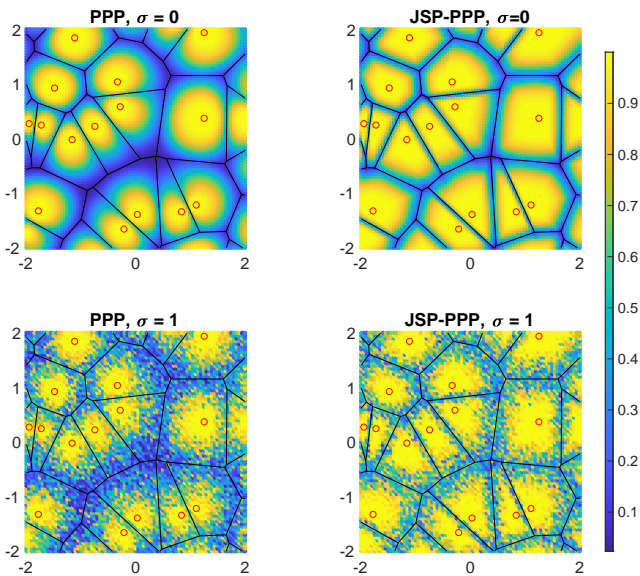}
     \caption{Illustration of the conditional success probabilities for $\theta=1$ for the PPP with iid log-normal shadowing and the JSP-PPP with $\sigma = 0,1$ under the strongest-BS association, $\lambda=1$, $\alpha=4$. }
    \label{fig:shadowing-cell-illu}
\end{figure}

Fig. \ref{fig:shadowing-cell-illu} shows how the conditional success probabilities with a fixed $\theta=1$ are distributed for the PPP with iid log-normal shadowing and the JSP-PPP model with the strongest-BS association. For $\sigma=0$, the region where $P_{\rm{s}}(\theta)>0.8$ appears elliptical around the nucleus for the PPP; in contrast, for the JSP-PPP, the region where $P_{\rm{s}}(\theta)>0.8$ is enlarged and adapts to the cell shape almost perfectly. For $\sigma = 1$, both regions are blurred due to the shadowing variance.

Fig. \ref{fig:md-sigmaneq0} shows the simulation results for the SIR meta distribution of the JSP-PPP model with fixed reliabilities. The meta distribution for the (standard) triangular lattice and the (standard) PPP model are plotted for comparison. Under the strongest-signal association, the meta distribution decreases with $\sigma$, shifting the curve towards that of the PPP.

\begin{figure}[t]
    \centering
    \begin{subfigure}{0.8\columnwidth}
    \includegraphics[width = \linewidth]{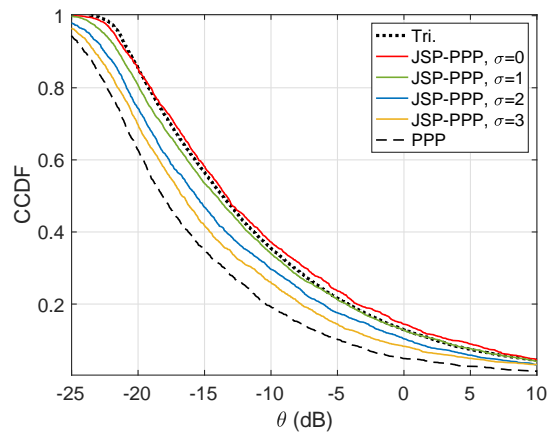}
     \caption{$\bar{F}_{P_{\rm{s}}}(\theta,x),~x=0.99$.}
    \end{subfigure}
        \vspace{1em}

     \begin{subfigure}{0.8\columnwidth}
    \includegraphics[width =  \linewidth]{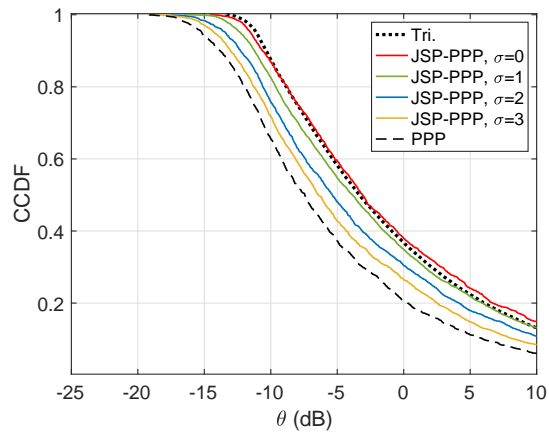}
    \caption{$\bar{F}_{P_{\rm{s}}}(\theta,x),~x=0.9$.}
    \end{subfigure}
    \caption{The SIR meta distribution for the JSP-PPP model with $\sigma = 0,1,2,3$ and $x=0.9,0.99$, $\alpha=4$. The black dashed and dotted curves denote the meta distribution for the PPP and the triangular lattice (without shadowing), respectively. }
    \label{fig:md-sigmaneq0}
\end{figure}

\begin{figure}[t]
    \centering
     \begin{subfigure}{0.8\columnwidth}
    \includegraphics[width = \linewidth]{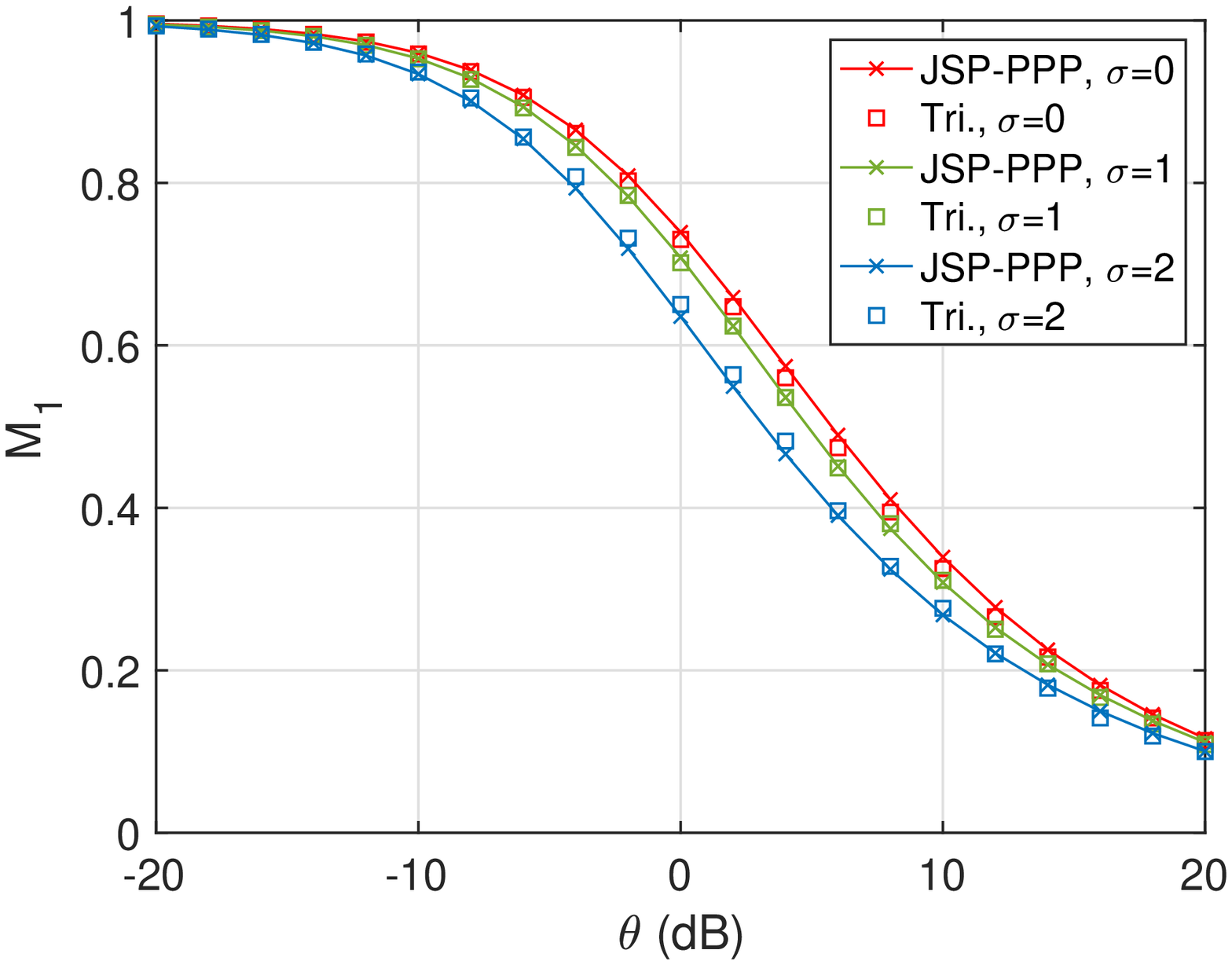}
     \caption{$M_1(\theta)$.}
     \label{fig:m1}
    \end{subfigure}
\vspace{1em}

     \begin{subfigure}{0.8\columnwidth}
    \includegraphics[width =  \linewidth]{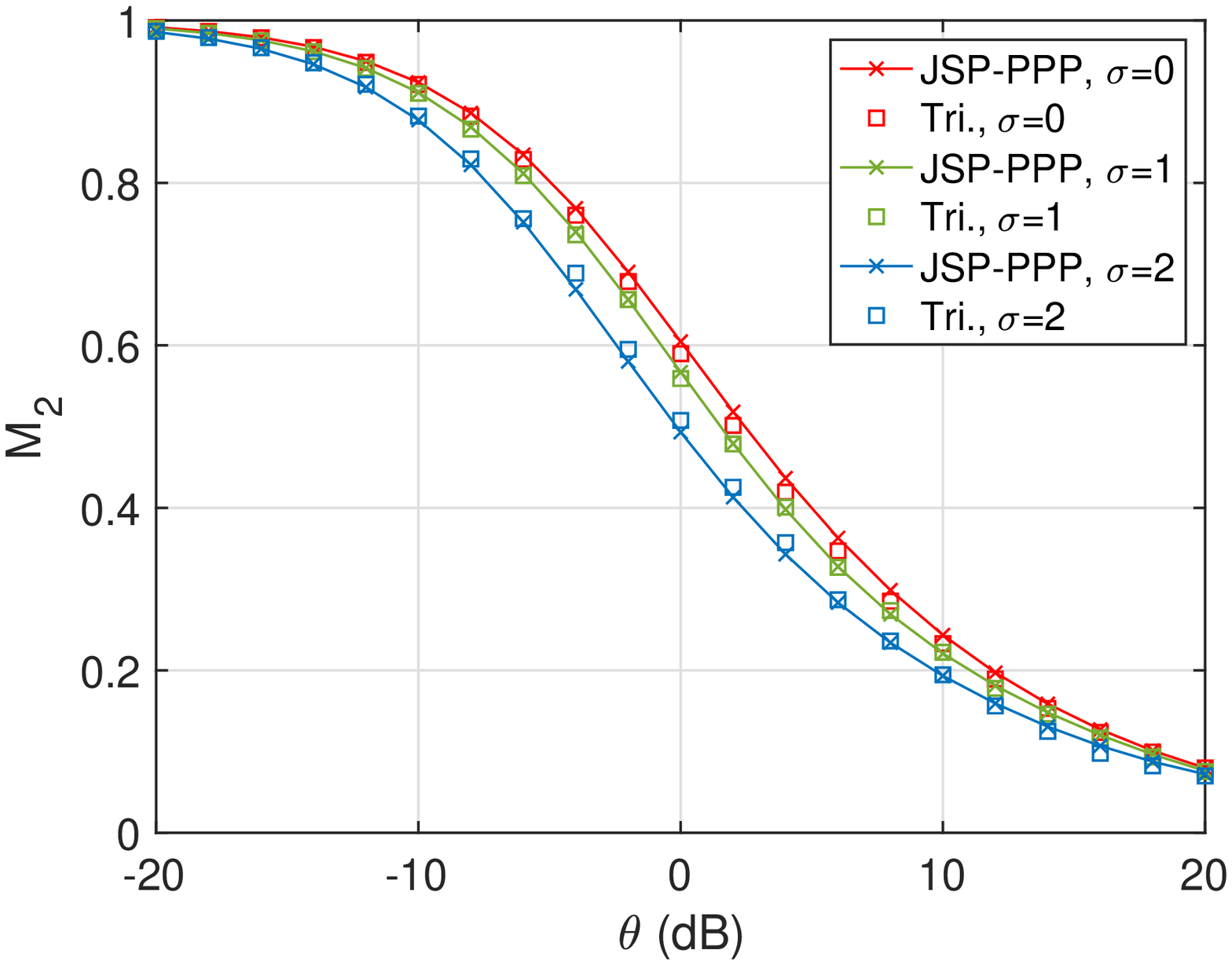}
     \caption{$M_2(\theta)$.}
     \label{fig:m2}
    \end{subfigure}
    \caption{First two moments of the conditional success probability for JSP-PPP and the triangular lattice with iid log-normal shadowing. $\sigma=0,1,2$, $\alpha=4$.}
    \label{fig:m1,m2}
\end{figure}

Fig. \ref{fig:m1,m2} plots the first two moments of the conditional success probability for the JSP-PPP model and the triangular lattice with iid log-normal shadowing. Both moments are approximately the same for a set of different values of $\sigma$. Hence the meta distribution for the JSP-PPP model is close to that of the triangular lattice, since the first two moments generally lead to a good approximation of the meta distribution \cite{haenggi2016meta}.

\subsection{Convergence of the Path Loss Point Process}
The path loss point process of the JSP model for a point pattern $\phi$ is \(\Pi=\) \(\left\{\|x\|^\alpha/ K_{x}\right\}_{x \in\phi}\). In this subsection,  we show that the path loss point process of the JSP model for any realization of the PPP converges to that of a PPP as $\sigma\to\infty$. First we recall a result from \cite{Blasz15Poisson}.

\begin{proposition}{\cite{Blasz15Poisson}}
\label{Proposition: convergence}
For any deterministic and locally finite collection of points $\phi\subset \mathbb{R}^2$ without a point at the origin, let the shadowing coefficients, $\{K_x\}_{{x\in\phi}}$, be iid log-normal random variables with $\E K_x = 1$
and $\Var(\log(K_x)) =\sigma^2$. If there is a constant \(0<\lambda<\infty\) such that as \(t \rightarrow \infty\)
\begin{equation}
\frac{\phi\left(b(o,t)\right)}{\pi t^{2}} \rightarrow \lambda,
\label{eq:homo}
\end{equation}
then the path loss point process $\Pi$ after rescaling by $(\E K_x^\delta)^{1/\delta} = \exp{(-\sigma^2(1-\delta)/2)}$ converges weakly as \(\sigma \rightarrow \infty\) to that of the PPP on \(\mathbb{R}^{+}\) with intensity measure $
\Lambda([0,t))=\lambda\pi t^2.$
\end{proposition}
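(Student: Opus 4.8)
The plan is to exploit the independence of the shadowing coefficients to represent the rescaled path loss process as a superposition of independent single-atom point processes, and then to invoke a Poisson limit theorem for null arrays. Write $\xi_x\triangleq\|x\|^\alpha/K_x$ and let the rescaled process be $\Pi_\sigma=\sum_{x\in\phi}\delta_{a\xi_x}$ with $a=(\E K_x^\delta)^{1/\delta}=\exp(-\sigma^2(1-\delta)/2)$. Since the $K_x$ are independent, $\Pi_\sigma$ is the superposition of the independent one-point processes $\delta_{a\xi_x}$, and as $\sigma\to\infty$ each atom is displaced by an increasingly diffuse (infinitely wide on the log-scale) multiplicative factor, so the array is asymptotically null. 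For such arrays, weak convergence of $\Pi_\sigma$ to a PPP with intensity measure $\Lambda$ (as in the statement) is equivalent to two conditions: (i) convergence of intensity measures, $\sum_{x\in\phi}\P(a\xi_x\in B)\to\Lambda(B)$ for every relatively compact Borel set $B\subset\mathbb{R}^+$, and (ii) uniform asymptotic negligibility, $\sup_{x\in\phi}\P(a\xi_x\in B)\to 0$. I would verify these two conditions and then quote the null-array Poisson limit theorem.

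Condition (ii) is the easy half, but it is precisely where the hypothesis that $\phi$ has no point at the origin enters. For a bounded set $B\subset[0,M]$ one has $\P(a\xi_x\in B)\le\P(a\|x\|^\alpha/K_x\le M)=\P(\log K_x\ge\log(a\|x\|^\alpha/M))$, and since $\log K_x\sim\mathcal{N}(-\sigma^2/2,\sigma^2)$ this equals $\P(Z\ge \sigma\delta/2+(\alpha\log\|x\|-\log M)/\sigma)$ for a standard normal $Z$. The drift $\sigma\delta/2$ diverges, and because $\phi$ has no atom at the origin we have $\inf_{x\in\phi}\|x\|=d_{\min}>0$, so the argument is bounded below uniformly in $x$ by $\sigma\delta/2+(\alpha\log d_{\min}-\log M)/\sigma\to\infty$; hence $\sup_{x\in\phi}\P(a\xi_x\in B)\to 0$.

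The substance of the proof is condition (i). I would convert the sum over $\phi$ into a Stieltjes integral against the radial counting function $N(r)=\phi(b(o,r))$, writing the intensity on $[0,t)$ as $\sum_{x}\P(K_x>a\|x\|^\alpha/t)=\int_0^\infty\P(K_x>a r^\alpha/t)\,\mathrm{d}N(r)$. The homogeneity hypothesis $N(r)/(\pi r^2)\to\lambda$ suggests replacing $\mathrm{d}N(r)$ by its asymptotic density $2\lambda\pi r\,\mathrm{d}r$. The substitution $v=\log(r^\alpha/t)$ then expresses the remaining integral as the convolution of the Gaussian smoothing kernel of $\log K_x$ with an exponential weight $e^{\delta v}$; evaluating this Gaussian integral is exactly a normal moment generating function and produces the factor $\E K_x^\delta$. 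The rescaling constant is calibrated so that $a^{-\delta}=(\E K_x^\delta)^{-1}$ cancels this factor precisely, leaving a $\sigma$-independent limit equal to $\Lambda$.

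The hard part is justifying the replacement of $\mathrm{d}N(r)$ by its asymptotic density, since the homogeneity condition controls only the long-range average of $N$, not its local behaviour. The saving feature is that as $\sigma\to\infty$ the smoothing kernel becomes infinitely wide in the log-variable, so the smoothed intensity depends on $\phi$ only through long-range averages of its counting function and local irregularities of the distances $\{\|x\|\}$ are washed out. Concretely I would split the radial integral into a bounded near-region, whose contribution is $O(1/\sigma)$ by the density bound $1/(\sigma\sqrt{2\pi})$ on the normal density, and a far-region on which $N(r)/(\pi r^2)$ is uniformly within $\epsilon$ of $\lambda$; a dominated-convergence argument, with the integrability of $u^{\delta-1}$ against the Gaussian weight furnishing the dominating function, then yields (i). Combining (i) and (ii) with the null-array Poisson limit theorem gives the claim. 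Equivalently, the same estimates justify the linearization $\log\prod_{x}\E e^{-f(a\xi_x)}=\sum_{x}\log(1-(1-\E e^{-f(a\xi_x)}))\to-\int(1-e^{-f})\,\mathrm{d}\Lambda$ in the Laplace-functional formulation, where (ii) validates the step $\log(1-s)\approx-s$ and (i) identifies the limit.
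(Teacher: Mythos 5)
First, a mismatch with the premise of the comparison: the paper does not prove this proposition at all --- it is imported verbatim from \cite{Blasz15Poisson} and used as a black box (the only related in-paper proof is that of the lemma that follows, which merely verifies the homogeneity condition \eqref{eq:homo} for the modified pattern $\hat\phi$). So your proposal can only be measured against the proof in the cited source, and there it matches in substance: the original argument is likewise a superposition-of-independent-one-point-processes argument settled by a null-array (Grigelionis-type) Poisson limit theorem, equivalently the Laplace-functional linearization you sketch at the end, with the intensity computation carrying the real weight. Your two conditions are the right ones; your use of the hypothesis that $\phi$ has no point at the origin --- giving $d_{\min}=\inf_{x\in\phi}\|x\|>0$ via local finiteness, hence uniform negligibility --- is exactly where that hypothesis enters; and the calibration check $a^{-\delta}\,\E K_x^\delta=1$ for $a=\exp(-\sigma^2(1-\delta)/2)$ is correct. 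Your heuristic for why the long-range hypothesis \eqref{eq:homo} suffices is also the right mechanism: after the exponential tilt, the Gaussian mass concentrates at radii $r\approx\bigl(t e^{\sigma^2\delta/2}\bigr)^{1/\alpha}\to\infty$, where $N(r)/(\pi r^2)$ is uniformly close to $\lambda$, while the bounded near-region vanishes because the normal-tail argument $\sigma\delta/2+(\alpha\log r-\log t)/\sigma$ diverges uniformly over $r\ge d_{\min}$.

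Two points to tighten. (1) Scale bookkeeping: your convolution computation gives $\sum_{x\in\phi}\P(a\xi_x<t)\to\lambda\pi t^{\delta}$ on the path-loss scale, not $\lambda\pi t^{2}$; the measure $\Lambda([0,t))=\lambda\pi t^{2}$ in the statement refers to the distance scale, i.e., the image of the rescaled process under the homeomorphism $s\mapsto s^{1/\alpha}$ of $\R^+$ (this reading is confirmed by how the paper applies the proposition: $\lambda=1/\pi$ obtained from $\E\hat\Phi(b(o,t))=t^{2}$ yields $\Lambda([0,t))=t^{2}$). The discrepancy is harmless but inherited from the paper's loose transcription, so you should state explicitly which scale you work on rather than silently asserting that your limit ``equals $\Lambda$''. (2) Your dominated-convergence/splitting step needs a global a priori bound $N(r)\le C(1+r^{2})$ to furnish the dominating function; this follows from \eqref{eq:homo} together with local finiteness of $\phi$, and deserves a line. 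Finally, requiring intensity convergence for every relatively compact Borel set is more than the null-array theorem demands: convergence on a determining class such as the intervals $[0,t)$ (the limit measure being diffuse) is enough, and that is precisely what your computation delivers.
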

The rescaling of $\Pi$ by $(\E K_x^\delta)^{1/\delta}$ is necessary to obtain a non-zero intensity measure as $\sigma\to\infty$. Now, when $\phi$ be a realization of the PPP, we have the convergence of the path loss point process for the JSP model as follows.

\begin{lemma}
    The path loss point process of the JSP model for any realization of the PPP after rescaling by $P_0\exp{(-\sigma^2(1-\delta)/2)}$ converges weakly as \(\sigma \rightarrow \infty\) to that of the PPP on \(\mathbb{R}^{+}\) with intensity measure $
\Lambda([0,t))= t^2. $  
\end{lemma}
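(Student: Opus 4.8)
The plan is to reduce the statement to Proposition~\ref{Proposition: convergence} by absorbing the deterministic cell-radius factor into the point locations, so that the residual shadowing becomes i.i.d.\ log-normal. Recall that conditioned on $\Phi$, $\log K_x\sim\mathcal N(\mu_x,\sigma^2)$ with $\mu_x=\log(P_0 r(x)^\alpha)-\sigma^2/2$; hence we may write $K_x=P_0\,r(x)^\alpha\,\tilde K_x$, where $\{\tilde K_x\}$ are i.i.d.\ log-normal with $\log\tilde K_x\sim\mathcal N(-\sigma^2/2,\sigma^2)$, so that $\Ex\tilde K_x=1$ and $\Var(\log\tilde K_x)=\sigma^2$. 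Substituting into the path loss values gives
\[
\frac{\|x\|^\alpha}{K_x}=\frac{(\|x\|/r(x))^\alpha}{P_0\,\tilde K_x},
\]
so that $\Pi$ is exactly the path loss point process obtained by applying the i.i.d.\ log-normal shadowing model, with coefficients $\hat K_x\triangleq P_0\tilde K_x$, to the radially displaced pattern $\phi^\star\triangleq\{x/r(x):x\in\phi\}$.

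First I would record that $\{\hat K_x\}$ are i.i.d.\ log-normal with $\Ex\hat K_x=P_0$ and $\Var(\log\hat K_x)=\sigma^2$, and that the associated rescaling in Proposition~\ref{Proposition: convergence} is $(\Ex\hat K_x^\delta)^{1/\delta}=P_0\exp(-\sigma^2(1-\delta)/2)$, which matches the factor in the statement. Since multiplying all shadowing coefficients by the constant $P_0$ merely rescales every path loss value, Proposition~\ref{Proposition: convergence} applies to $\hat K_x$ provided we use the rescaling $(\Ex\hat K_x^\delta)^{1/\delta}$; the limiting intensity measure is then $\lambda^\star\pi t^2$, where $\lambda^\star$ is the asymptotic density of $\phi^\star$.

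The crux is therefore to verify the homogeneity condition \eqref{eq:homo} for $\phi^\star$ and to identify $\lambda^\star$. For $i\geq1$ the radius $r(x_i)$ has ccdf $\Pr(r(x_i)>s)=e^{-\lambda\pi s^2}$ by Lemma~\ref{lemma: r(x_i)/|x_i|}, which coincides with the ccdf of $\bar R$; moreover, under the Palm distribution at a point $x$, isotropy and Slivnyak's theorem make $r(x)$ distributed as $\bar R$ and independent of $\|x\|$. Using the Mecke formula,
\[
\Ex[\phi^\star(b(o,t))]=\lambda\int_{\R^2}\Pr\!\big(\bar R>\|x\|/t\big)\,\dd x
=2\pi\lambda\int_0^\infty \rho\,e^{-\lambda\pi\rho^2/t^2}\,\dd\rho=t^2,
\]
so $\Ex[\phi^\star(b(o,t))]/(\pi t^2)\to 1/\pi$, identifying $\lambda^\star=1/\pi$ and hence the limiting intensity $\lambda^\star\pi t^2=t^2$, as claimed. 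The single point $x_0$, whose radius $r(x_0)=R_0(0)$ follows a different law, contributes $O(1)$ and does not affect the density.

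The main obstacle is upgrading this first-moment identity to the almost-sure limit required by \eqref{eq:homo}, since Proposition~\ref{Proposition: convergence} is stated for deterministic patterns and must be invoked pathwise for a.e.\ realization $\phi$. I would handle this by noting that the marked process $\{(x,r(x))\}_{x\in\phi}$ is stationary and ergodic---$r(\cdot)$ being a translation-covariant, a.s.\ finite local functional of $\phi$---and that $r(x)$ has exponential tails; a spatial ergodic/dominated-averaging argument then promotes the expectation above to the a.s.\ convergence $\phi^\star(b(o,t))/(\pi t^2)\to 1/\pi$. Conditioning on such a realization, $\phi^\star$ is deterministic and satisfies \eqref{eq:homo} with density $1/\pi$, while $\{\hat K_x\}$ remain i.i.d.\ and independent of $\Phi$; Proposition~\ref{Proposition: convergence} then yields the stated weak convergence.
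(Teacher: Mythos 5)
Your proposal is correct and takes essentially the same route as the paper: the identical factorization $K_x = P_0\, r(x)^\alpha \tilde{K}_x$ reducing $\Pi$ to the iid log-normal model of Proposition 1 on the displaced pattern $\{x/r(x)\}_{x\in\phi}$ with the same rescaling $(\E \hat{K}_x^\delta)^{1/\delta} = P_0\exp(-\sigma^2(1-\delta)/2)$, followed by verifying the homogeneity condition through the first moment and appealing to ergodicity for the almost-sure limit. The only difference is cosmetic: you compute $\E[\phi^\star(b(o,t))]=t^2$ via the Mecke formula and the ccdf of $\bar{R}$, whereas the paper sums $\P(r(x_i)/\|x_i\|>1/t)$ over the distance-ordered points using Lemmas 1 and 4; both yield exactly $t^2$, and since the Mecke integral already treats every point (including the nearest) under its Palm distribution, your separate $O(1)$ remark about $x_0$ is unnecessary.
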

\begin{proof}
We first show that the JSP model for a point pattern $\phi$ can be viewed as the iid log-normal shadowing model in Proposition \ref{Proposition: convergence} with a modified point pattern $\hat{\phi}$. Then we show that when $\phi$ is a realization of the PPP, its modified BS point pattern $\hat{\phi}$ satisfies the convergence criterion.

For the JSP model,
$\{K_x\}_{x\in\phi}$ are independent but not necessarily identically distributed log-normal random variables such that $\E K_x = P_0 r(x)^\alpha$
and $\Var(\log(K_x)) =\sigma^2.$ We have \[\Pi =
 {\left\{ x\in\phi\colon \frac{\|x\|^\alpha}{r(x)^\alpha}\frac{1}{K_x/r(x)^\alpha}\right\}} = \left\{x\in\hat{\phi}\colon\frac{\|x\|^\alpha}{\hat{K}_x} \right\},\]
where $\hat{\phi} \triangleq \{ x\in\phi \colon{x}/{r(x)}\}$ and $\hat{K}_x \triangleq K_x/r(x)^\alpha$. Now $\{\hat{K}_x\}_{x\in\phi}$ are iid log-normal with $\E \hat{K}_x = P_0$ and $\Var(\log(\hat{K}_x)) =\sigma^2$. After rescaling of $\Pi$ by $(\E \hat{K}_x^\delta)^{1/\delta} = P_0 \exp{(-\sigma^2(1-\delta)/2)}$, we retrieve the iid shadowing model in \cite{Blasz15Poisson}. Now it suffices to show that $\hat{\phi}$ satisfies the homogeneity condition (\ref{eq:homo}). 

For the PPP, 
\begin{align}
\nonumber    \E \hat{\Phi}( b(o,t)) &= \E \sum_{i\geq 0} \mathbbm{1} (\|x_i\|/r(x_i) < t)\\
  \nonumber  & = \sum_{i\geq 0} \P (r(x_i)/\|x_i\| > 1/t)\\
 \nonumber   & \peq{a} t^2.
\end{align}
Step (a) follows from the ccdf of $r(x_i)/\|x_i\|$ given in  Lemma \ref{lemma: |x_0|/r(x)} and Lemma \ref{lemma: r(x_i)/|x_i|}. $\mathbbm{1}(\cdot)$ is the indicator function. Hence we have $\E \hat{\Phi}( b(o,t))/\pi t^2= 1/\pi.$ By the ergodicity of the PVT \cite{Heinrich1994}, $\lim_{t\to\infty}\hat{\phi}( b(o,t))/\pi t^2 = 1/\pi$.
\end{proof}

 \section{Conclusions}
 This paper provides new results on the directional radii of the typical and the zero cell in the Poisson Voronoi tessellations, which characterize the cell shape and unveil the cell asymmetry.  
Based on the directional radii, a joint spatial-propagation model for coverage-oriented cellular networks is studied. In contrast to virtually all prior models, the JSP model ascribes the Poisson deployment of base stations to an intelligent design by the operators, rather than to pure randomness as it would result from a blind placement, ignorant of propagation conditions. As a result, the JSP model with the seemingly pessimistic Poisson deployment performs as well as the standard triangular lattice model. For instance, with $\alpha=4$, there is a 3.4 dB SIR gap between the standard Poisson model and the standard triangular lattice model. Such a gap is eliminated with the JSP model when $\sigma=0$.
This work also highlights the effect of the variance of the large-scale path loss along the cell edge. In the limiting case of $\sigma\to\infty$, the path loss point process for the JSP-PPP converges to that of a PPP. 

{For future work, the effects of shadowing correlation beyond that derived from the cell radius correlation can be analyzed. For instance, the variance of shadowing is usually correlated with distance, and/or the shadowing coefficients from nearby BSs are correlated even for deterministic BS locations. Another interesting direction is the modeling and analysis of  capacity-oriented networks, where  one may ascribe the Poisson deployment to the local user density. In this case, the typical user has a higher chance of being in close proximity to its serving BS.}

\appendix
\subsection{Proof of Corollary 1}

Letting $\varphi=0$, the joint distribution of $D_0$, $R_0(0)$ is 
\begin{equation}
    f_{R_0(0),D_0}(x,y) = (2\lambda\pi)^2xy\exp{(-\lambda \pi y^2)},\quad y\geq x \geq 0.
\end{equation}
So the pdf of $R_0(0)$ is
\begin{equation}
  f_{D_o}(y) = \int_{0}^{y}f_{R_0(0),D_0}(x,y)\mathrm{d}x
   = 2(\lambda\pi)^2y^3\exp{(-\lambda \pi y^2)}.
\end{equation}

The ccdf of $R_0(0)-D_0$ given $D_0$ can be written as
\begin{align}
    \P(R_0(0) - D_0 >y\mid D_0 =x) &= \P(R_0(0)> x+y \mid D_0 = x)\nonumber \\
    &= \exp(-\lambda\pi(y^2+2xy)),
\end{align}and
\begin{equation}
    f_{R_0(0) - D_0\mid D_0}(y\mid x)= 2\lambda\pi(x+y)\exp(-\lambda\pi(y^2+2xy)).
\end{equation}
The joint distribution of $R_0(0)$, $R_0(0)-D_0$ is
\begin{equation}
     f_{R_0(0),R_0(0)-D_0}(x,y) = (2\lambda\pi)^2x(x+y)\exp{(-\lambda \pi (x+y)^2)},
\end{equation}
which gives the pdf of $R_0(0)-D_0$ as
\begin{align}
  f_{R_0(0)-D_0}(y) &=(2\lambda\pi)^2 \int_{0}^{\infty}x(x+y)\exp{(-\lambda \pi (x+y)^2)}\mathrm{d}x \nonumber\\
  &= \sqrt{\lambda}\pi \erfc{(y\sqrt{\lambda\pi})}.
\end{align}

For $\varphi = \pi$, we obtain $S(\pi,x,y) = 0$,  $\frac{\partial S(\pi,x,y)}{\partial y}= 0$. $f_{D_0, R_0(\pi)}(x,y) =2\lambda \pi x 2\lambda\pi y\exp(-\lambda\pi (x^2+y^2))=f_{D_0}(x)f_{R_0(\pi)}(y) $.  Thus, $D_0$ and $ R_0(\pi)$ are iid.

\subsection{$R(\varphi)$ in One-Dimensional PPPs}
Let $\Phi$ be a one-dimensional PPP with intensity $\lambda$. Let $X_1, X_{-1}$ be the distances from the origin (the typical point) to the first right and first left point. Let $R_1 = \min\{X_{1}/2, X_{-1}/2\}$ and $R_2 = \max\{X_{1}/2, X_{-1}/2\}$. $R_1$, $R_2$ has the joint pdf
\[f_{{R}_{1}, {R}_{2}}\left(r_{1}, r_{2}\right)=8 \lambda^{2} \exp \left(-2 \lambda\left(r_{1}+r_{2}\right)\right), \quad 0 \leq r_{1} \leq r_{2}.\]
Now,\allowdisplaybreaks
\begin{align}\nonumber
   & \P (R(\pi)\leq r) \\
    &= \E[\P (R(\pi) \leq r\mid {R}_{1}=r_1,{R}_{2}=r_2))]\\\nonumber
& = \E \Big[\frac{r_2}{r_1+r_2} \mathbbm{1}(r_1\leq r\leq r_2) + \mathbbm{1}(r\geq r_2)\Big]
\\\nonumber
&= \int_{r}^{\infty}\int_{0}^{r}\frac{r_2}{r_1+r_2}8\exp(-2(r_1+r_2)) \dd r_1 \dd r_2 + \P (r_2\leq r)
\\\nonumber
& = 1-\exp{(-2\lambda r)}+2\lambda r\exp(-2\lambda r)-4\lambda^2r^2 \Eii (2\lambda r),
\end{align}
where $\Eii(x) = \int_{x}^{\infty}\frac{\exp(-t)}{t} \dd t$ is the exponential integral function. We have $\E R(\pi) = \E D = 1/3$, and $\E R(0) = 2/3$.

\bibliography{ref,ref1}
\bibliographystyle{IEEEtran}
\end{document}